\DeclareAcronym{5g}{
short=5G,
long= fifth generation,
}
\DeclareAcronym{ofdm}{
short=OFDM,
long= orthogonal frequency-division multiplexing,
}
\DeclareAcronym{bs}{
short=BS,
long= base station,
}
\DeclareAcronym{ue}{
short=UE,
long= user equipment,
}
\DeclareAcronym{ris}{
short=RIS,
long= reconfigurable intelligent surface,
}
\DeclareAcronym{siso}{
short=SISO,
long= single-input-single-output,
}
\DeclareAcronym{simo}{
short=SIMO,
long= single-input-multiple-output,
}
\DeclareAcronym{gcs}{
short=GCS,
long= global coordinate system,
}
\DeclareAcronym{lcs}{
short=LCS,
long= local coordinate system,
}
\DeclareAcronym{awgn}{
short=AWGN,
long= additive white Gaussian noise,
}
\DeclareAcronym{iid}{
short=i.i.d.,
long= independent and identically distributed,
}
\DeclareAcronym{snr}{
short=SNR,
long= signal-to-noise ratio,
}
\DeclareAcronym{aoa}{
short=AOA,
long= angle-of-arrival,
}
\DeclareAcronym{aod}{
short=AOD,
long= angle-of-departure,
}
\DeclareAcronym{los}{
short=LOS,
long= line-of-sight,
}
\DeclareAcronym{nlos}{
short=NLOS,
long= non-line-of-sight,
}
\DeclareAcronym{3d}{
short=3D,
long= three-dimensional,
}
\DeclareAcronym{2d}{
short=2D,
long= two-dimensional,
}
\DeclareAcronym{rfc}{
short=RFC,
long= radio-frequency chain,
}
\DeclareAcronym{jrcup}{
short=JrCUP,
long= joint RIS calibration and user positioning,
}
\DeclareAcronym{esprit}{
short=ESPRIT,
long= Estimation of Signal Parameters via Rotational Invariance Techniques,
}
\DeclareAcronym{cp}{
short=CP,
long= canonical polyadic,
}
\DeclareAcronym{ls}{
short=LS,
long= least-squares,
}
\DeclareAcronym{ula}{
short=ULA,
long= uniform linear array,
}
\DeclareAcronym{uav}{
short=UAV,
long= unmanned aerial vehicles,
}
\DeclareAcronym{ml}{
short=ML,
long= maximum likelihood,
}
\DeclareAcronym{crlb}{
short=CRLB,
long= Cram\'er-Rao lower bound,
}
\DeclareAcronym{fim}{
short=FIM,
long= Fisher information matrix,
}
\DeclareAcronym{rmse}{
short=RMSE,
long= root mean square error,
}
\DeclareAcronym{mmwave}{
short=mmWave,
long= millimeter wave,
}
\DeclareAcronym{thz}{
short=THz,
long= terahertz,
}
\DeclareAcronym{upa}{
short=UPA,
long= uniform planar array,
}
\DeclareAcronym{sp}{
short=SP,
long= scattering point,
}
\newtheorem{proposition}{Proposition}
\newtheorem{remark}{Remark}
\pgfplotsset{compat=newest}
\begin{document}


\title{
  JrCUP: Joint RIS Calibration and User Positioning for 6G Wireless Systems
}

\author{Pinjun~Zheng,~\IEEEmembership{Student Member,~IEEE},
Hui~Chen,~\IEEEmembership{Member,~IEEE}, 
Tarig~Ballal,~\IEEEmembership{Member,~IEEE}, 
Mikko~Valkama,~\IEEEmembership{Fellow,~IEEE},
Henk~Wymeersch,~\IEEEmembership{Fellow,~IEEE}, 
and~Tareq~Y.~Al-Naffouri,~\IEEEmembership{Senior Member,~IEEE}

\thanks{Pinjun~Zheng, Tarig~Ballal, and  Tareq~Y.~Al-Naffouri are with the Division of Computer, Electrical and Mathematical Science \& Engineering, King Abdullah University of Science and Technology (KAUST), Thuwal, 23955-6900, KSA. (Email: \{pinjun.zheng; tarig.ahmed; tareq.alnaffouri\}@kaust.edu.sa). Hui Chen and Henk Wymeersch are with the Department of Electrical Engineering, Chalmers University of Technology, 41296 Gothenburg, Sweden (Email: \{hui.chen; henkw\}@chalmers.se). Mikko Valkama is with the Department of Electrical Engineering, Tampere University, Finland (Email: mikko.valkama@tuni.fi).}
\thanks{This publication is based upon the work supported in part by the King Abdullah University of Science and Technology (KAUST) Office of Sponsored Research (OSR) under Award No. ORA-CRG2021-4695, in part by the EU H2020 RISE-6G project under grant 101017011, and in part by the Finnish Funding Agency for Innovation under the 6G-ISAC project.}
}
\maketitle

\begin{abstract}
  Reconfigurable intelligent surface (RIS)-assisted localization has attracted extensive attention as it can enable and enhance localization services in extreme scenarios.
  However, most existing works treat RISs as anchors with known positions and orientations, which is not realistic in applications with mobile or uncalibrated RISs.
  This work considers the \textit{joint RIS calibration and user positioning} (JrCUP) problem with an active RIS.
  We propose a novel two-stage method to solve the considered JrCUP problem. The first stage comprises a tensor-estimation of signal parameters via rotational invariance techniques (tensor-ESPRIT), followed by a channel parameters refinement using least-squares.
  In the second stage, a two-dimensional search algorithm is proposed to estimate the three-dimensional user and RIS positions, one-dimensional RIS orientation, and clock bias from the estimated channel parameters.
  The Cram\'er-Rao lower bounds of the channel parameters and localization parameters are derived to verify the effectiveness of the proposed tensor-ESPRIT-based algorithms. 
  In addition, simulation results reveal that the active RIS can significantly improve the localization performance compared to the passive case under the same system power supply in practical regions.
  Moreover, we observe the presence of blind areas with limited JrCUP localization performance, which can be mitigated by either leveraging more prior information or deploying extra base stations.
\end{abstract}

\begin{IEEEkeywords}
5G/6G, reconfigurable intelligent surface, tensor-ESPRIT, RIS calibration, positioning.
\end{IEEEkeywords}
  
\section{Introduction}

\Ac{ris} is an emerging technology for 5G/6G and beyond, which consists of an array of reflecting elements and offers distinctive characteristics that make the propagation environment controllable~\cite{Liu2021Reconfigurable,Pan2022Overview}.
With the flexibility to reshape wireless channels and the low cost of deployment, \ac{ris} has become one of the key enablers for future \ac{mmwave} and \ac{thz} band communication systems~\cite{He2022Beyond,Sarieddeen2021Overview,Chen2022Tutorial}.
Over the years, different types of \acp{ris} have been proposed and widely studied, including passive \acp{ris}, active \acp{ris}, hybrid \acp{ris}, and simultaneous transmitting and receiving (STAR) \acp{ris}~\cite{Zhang2022Active,Schroeder2022Two,Mu2022Simultaneously}.

Apart from the benefits to communication, the inclusion of \acp{ris} also opens new opportunities for radio localization.
Radio location via wireless networks has been regarded as an indispensable function in advanced 5G/6G systems, which plays an increasingly important role in various applications. For example, radio and network localization systems in scenarios where the global positioning system (GPS) is insufficient or not available are well-studied~\cite{Peral2018Survey,Zheng20235G,fang20203}.
A significant advantage that \acp{ris} offer in radio localization is reducing the number of \acp{bs} required to perform localization. A \ac{ris} can not only act as a new synchronized location reference but also provide additional geometric measurements thanks to its high angular resolution.
With \acp{ris} being introduced properly, it is possible to perform localization using a single \ac{bs}~\cite{Bjornson2022Reconfigurable} or even without any \acp{bs}~\cite{chen2023riss,Kim2022RIS,Chen2023Multi} at all.
Recent studies have shown the potential of \ac{ris}-assisted localization systems in various scenarios, e.g., localization under user mobility~\cite{Keykhosravi2022RIS}, simultaneous indoor and outdoor localization~\cite{He2022Simultaneous}, received-signal-strength based localization~\cite{Zhang2021MetaLocalization}, etc.
Besides, the position and orientation estimation error bounds for the \ac{ris}-assisted localization are derived in~\cite{Elzanaty2021Reconfigurable}. A few \ac{ris} beamforming design optimization works can be found in~\cite{Fascista2022RIS,Gao2022Wireless,Keskin2022Optimal}. 

Although promising results on \ac{ris}-assisted localization are shown in the literature, most of the existing works regard \ac{ris} as an anchor with known position and orientation, which is not realistic in some application scenarios involving  \acp{ris} with calibration errors or mobile \acp{ris}.
As a matter of fact, calibration errors in the \ac{ris} placement and geometric layout are unavoidable in practice, making \ac{ris} calibration a necessity for performing a high-precision localization.
The results in~\cite{Zheng2022Misspecified} reveal that a minor calibration error on \ac{ris} geometry can cause a non-negligible model mismatch and result in performance degradation, especially in a high \ac{snr} scenarios.
In this context, Bayesian analysis for a \ac{ris}-aided localization problem under \ac{ris} position and orientation offsets is carried out in~\cite{Emenonye2022RIS}, which discussed the possibility of correcting the \ac{ris} position and orientation under the near-field and far-field models.
Recently, some efforts have also been directed towards the integration of \acp{uav} and \ac{ris}s~\cite{Samir2021Optimizing,Ge2022Reconfigurable}, which extends the application scenarios where \ac{ris}s locations vary with time. In such scenarios, the localization of the \ac{ris} becomes a newly introduced issue that needs to be tackled. As a consequence, localizing the \ac{ris} itself while localizing the user has become an increasingly important problem to solve today.

The \ac{jrcup} problem refers to localizing the \ac{ris} and user simultaneously (with or without a priori information about the \ac{ris} position and orientation).
The \ac{3d} \ac{jrcup} localization problem was first formulated in~\cite{Lu2022Joint}, which explored the relationship between the channel parameters and localization unknowns, with the corresponding Fisher information matrix derived and analyzed.
Nonetheless, the adopted passive RIS in~\cite{Lu2022Joint} limits the localization performance, and the design of an efficient channel estimator for \ac{jrcup} is still missing. 
In~\cite{Ghazalian2022Joint}, a multi-stage solution for the \ac{2d} \ac{jrcup} problem in a hybrid \ac{ris}-assisted system is reported.
However, the hybrid \ac{ris} setup requires an extra central processing unit (CPU) for the receiver and \ac{ris} to share observations, which increases the system complexity.
Recently, active \ac{ris}, which can simultaneously reflect and amplify the incident signals without requiring extra CPU, has shown the potential to provide better localization performance compared to the passive \ac{ris}, as it enhances the reflected signals thus avoiding the overwhelming dominance of the TX-RX \ac{los} channel~\cite{Mylonopoulos2022Active,Mylonopoulos2023Maximum,chen2023riss}. 

In this work, we extend the \ac{3d} \ac{jrcup} problem in~\cite{Lu2022Joint} by developing an efficient channel estimation algorithm and utilizing active \acp{ris} to improve localization performance. The contributions of this work can be summarized as follows:
\begin{itemize}
  \item We formulate the \ac{jrcup} problem in an uplink \ac{simo} scenario with an active \ac{ris}. Motivated by the mobile \ac{ris} with vertical attitude adjustment capability, we define the unknown localization parameters as the 3D position of \ac{ue}, the 3D position and 1D orientation of the \ac{ris}, and the clock bias between the \ac{ue} and \ac{bs}. By performing a localizability analysis, we show that this problem is solvable and that more dimensions of \ac{ris} orientation can be estimated by introducing more \acp{ue}.
  \item We propose a two-stage solution to the \ac{jrcup} problem. In the first stage, we perform a coarse channel estimation via tensor-\ac{esprit} followed by a channel parameter refinement through a  \ac{ls}-based algorithm. In the second stage, a \ac{2d}-search-based algorithm is proposed to estimate the localization parameters.  
  \item The fundamental \acp{crlb} for the \ac{jrcup} problem are derived, which consist of the \acp{crlb} for the estimations of the channel parameters and localization parameters. We show that the received noise in the active \ac{ris}-involved system is colored and unknown which causes the proposed algorithm to fail to reach the theoretical bounds. However, the very minor gaps between the tested \acp{rmse} and the derived \acp{crlb} still show the effectiveness of our algorithms.
  \item Based on the derived \acp{crlb}, we compare the active \ac{ris} and the passive \ac{ris} setups. It is shown that the active \ac{ris} can outperform the passive \ac{ris} within practical power supply regions. The localization performance of the active \ac{ris} setup is improved with the power supply increasing up to a certain level where the performance saturates. Furthermore, we show that blind areas exist in the \ac{jrcup} problem, which can be restrained by leveraging prior geometric information and/or deploying more \acp{bs}.
\end{itemize}

The remainder of this paper is organized as follows.
Section \ref{sec_model} introduces the system model and formulates the \ac{jrcup} problem.
Section \ref{sec_mathpre} reviews the necessary mathematical preliminaries for the tensor decomposition and \ac{esprit} algorithms, based on which a two-stage solution for the \ac{jrcup} problem is proposed in Section~\ref{sec_algos}. 
The \acp{crlb} of the underlying estimation problems are derived in Section~\ref{sec_crlb}.
Simulation results are presented in Section~\ref{sec_sims} followed by the conclusion of this work in Section~\ref{sec_conclusion}. 

Notations: Italic letters denote scalars (e.g., $a$), bold lower-case letters denote vectors (e.g., $\mathbf{a}$), bold upper-case letters denote matrices (e.g., $\mathbf{A}$), and bold calligraphic letters denote tensors (e.g., $\bm{\mathcal{A}}$).
The notations $(\cdot)^\mathsf{T} $, $(\cdot)^\mathsf{*} $, $(\cdot)^\mathsf{H} $, $(\cdot)^{-1}$, $(\cdot)^\dagger$, and $\text{tr}(\cdot)$ are reserved for the transpose, conjugate, conjugate transpose, inverse, Moore-Penrose pseudo-inverse, and the matrix trace operations.
The notation $\otimes$ denotes the Kronecker product, $\circ$  denotes the outer product, and $\odot$ denotes the Hadamard product.
We use $[\mathbf{x}]_i$ to represent the $i$th entry of a vector $\mathbf{x}$, and $[\mathbf{X}]_{i,j}$ to represent the entry in the $i$th row and $j$th column of a matrix $\mathbf{X}$. 
The notations $\mathfrak{R}(\cdot)$ and $\mathfrak{I}(\cdot)$ denote the operations of taking
the real and imaginary parts of a complex quantity, respectively.


\section{System Model}
\label{sec_model}

\subsection{Active \ac{ris}}\label{eq_IIA}

\begin{figure}[t]
  \centering
  \includegraphics[width=3.6in]{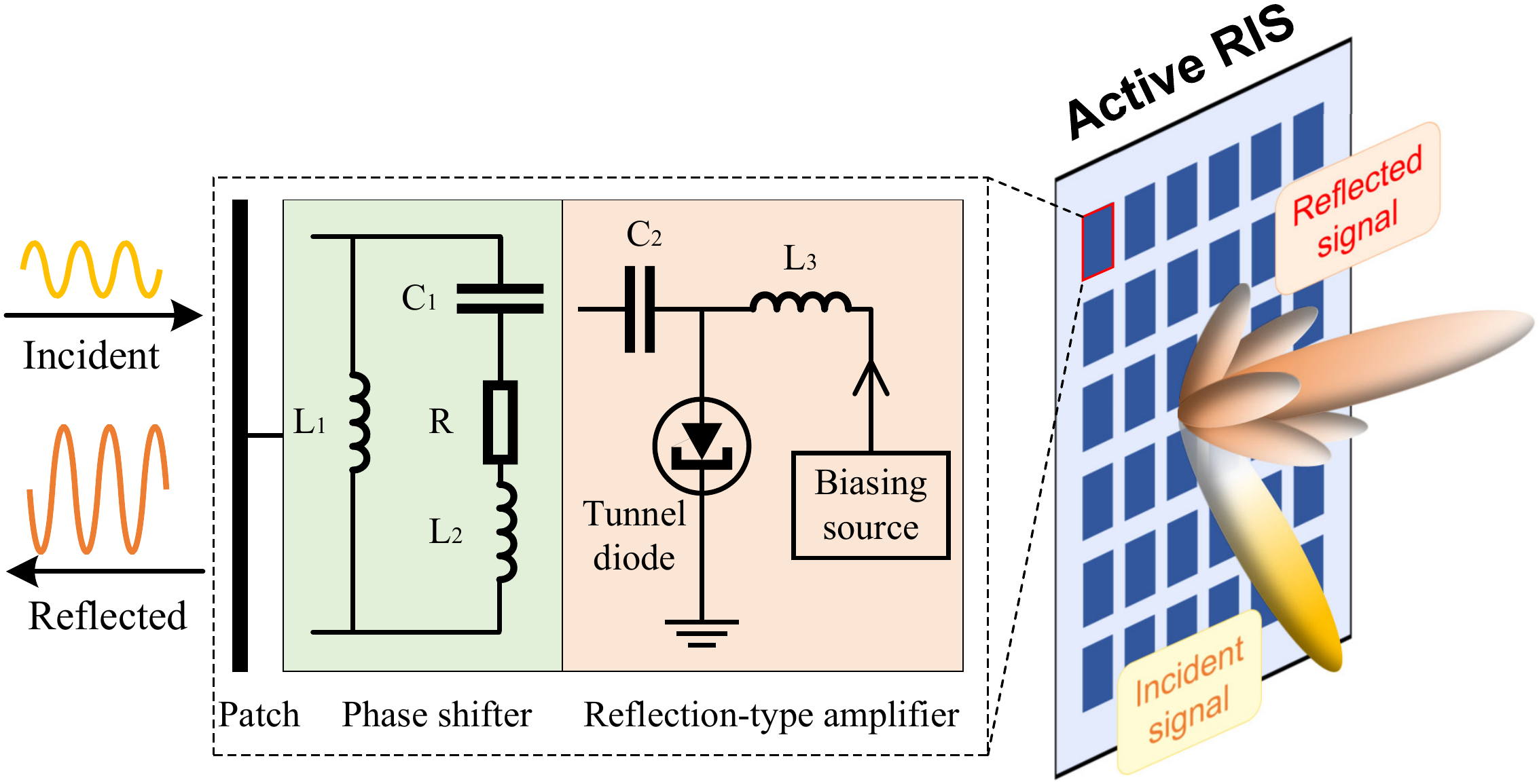}
  \vspace{-1.5em}
  \caption{ 
    Schematic diagram of the hardware architecture of a typical active \ac{ris}. Each \ac{ris} element integrates a phase shifter~\cite{Abeywickrama2020Intelligent} and a reflection-type amplifier~\cite{Long2021Active}.
    }
  \label{fig_activeRIS}
\end{figure}

Active \ac{ris} is an array of active elements that can scatter the incident signals with both amplification and tunable phases. 
Fig.~\ref{fig_activeRIS} presents a schematic diagram of the hardware architecture of a typical active \ac{ris}, where the \ac{ris} element consists of a phase shifter and a reflection-type amplifier~\cite{Zhang2022Active}.
Many circuit implements have been proposed to achieve the function of phase shift and amplification.
For example, the phase shift can be realized through a parallel resonant circuit~\cite{Abeywickrama2020Intelligent}, and the amplification can be realized by a tunnel diode circuit~\cite{Long2021Active}, as depicted in Fig.~\ref{fig_activeRIS}.
Usually, the circuit network in an active \ac{ris} element can be modeled as a two-port component whose characteristics can be described by an S-parameter matrix~\cite{Rao2023Active}.\footnote{The investigation into detailed circuit networks is beyond the scope of this work, so we simply adopt two cascading mathematical operations (i.e., phase shift and amplification) to model the function of the active RIS.} 
By modulating the load impedance of each \ac{ris} element, the desired amplitude and phase change can be obtained (examples can be found in, e.g.,~\cite{Long2021Active,Abeywickrama2020Intelligent}).

Consider an active \ac{ris} of size $N_{\text{R},1}\times N_{\text{R},2}$.
Assuming all the \ac{ris} elements keep the same amplification coefficient, the \ac{ris} profile for a transmission can be denoted as $\bm{\gamma}=p\tilde{\bm{\gamma}}$, where $p>1$ is the amplification coefficient and $\tilde{\bm{\gamma}}\in\mathbb{C}^{N_{\text{R},1}N_{\text{R},2}}$ is the phase-shift vector with unit-modulus entries.
We define the incident signal power \textit{per \ac{ris} element} (assumed identical across \ac{ris} elements) as $P_\text{in}$. Then the required power supply of the active \ac{ris} can be calculated as~\cite{Peng2022Multi}
\begin{equation}\label{eq_PR1}
	P_\text{R} = (p^2-1)N_{\text{R},1}N_{\text{R},2}(P_\text{in} + \sigma_r^2),
\end{equation}
where $\sigma_r^2$ denotes the power of the thermal noise introduced in each active \ac{ris} element.
Note that~\eqref{eq_PR1} is an idealized model, and the actual energy consumption of an active RIS needs to further account for the RIS mutual coupling (MC) effect,\footnote{{The MC effect on RIS refers to coupling that arises between adjacent \ac{ris} elements, which causes a higher energy consumption and lower achievable data rates in wireless communication systems~\cite{Saab2022Optimizing,Di2022Communication}. The usage of active RIS can further accentuate this impact. The method development in this paper is based on a RIS MC-free scenario, while the impact of RIS MC will be examined in Subsection~\ref{sec_MC}.}} electronic circuit energy consumption, energy efficiency, etc.

\subsection{Geometry Model}\label{sec_geomodel}

\begin{figure}[t]
  \centering
  \includegraphics[width=3.3in]{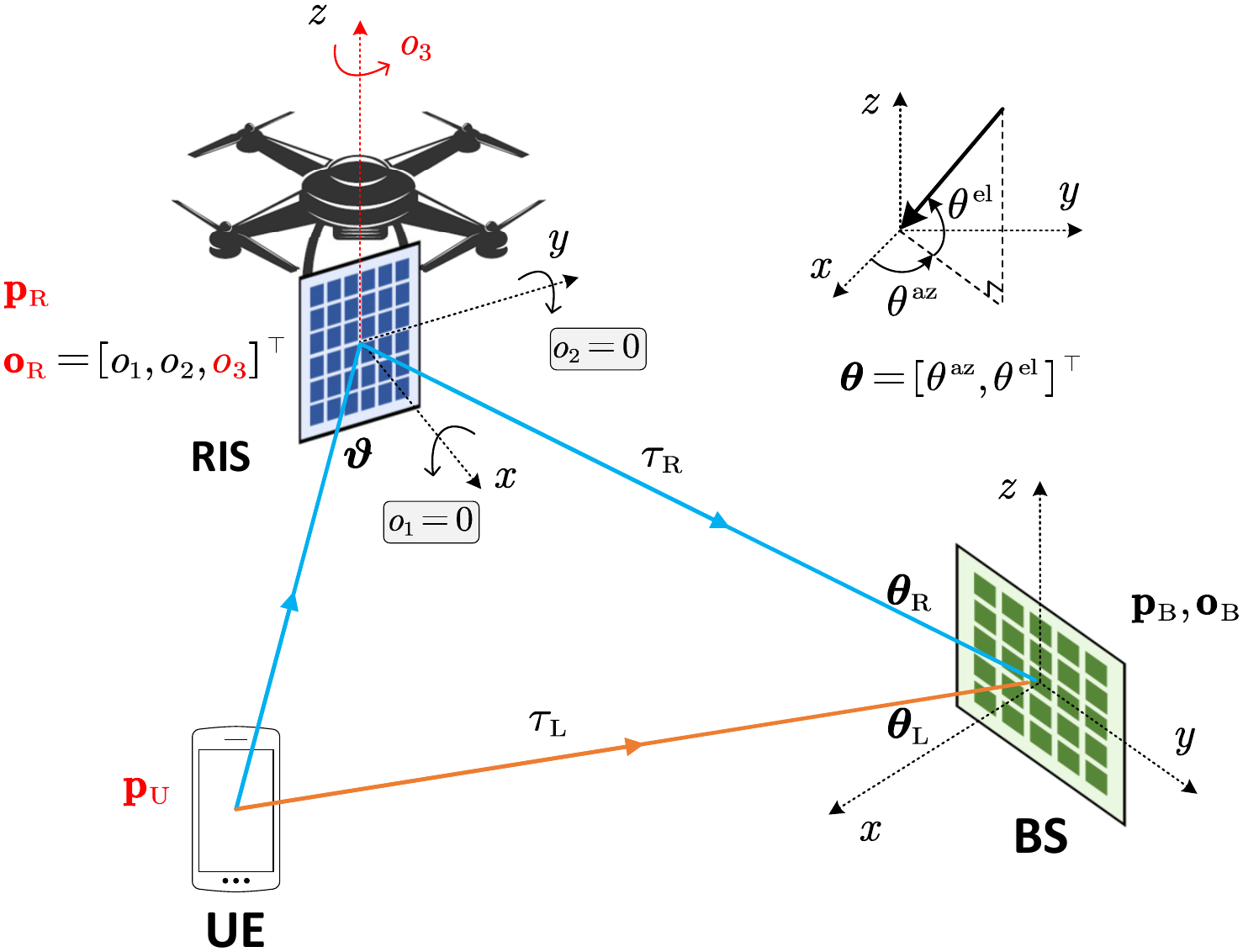}
  \caption{ 
    {Illustration of an uplink \ac{simo} \ac{jrcup} scenario, where the states of the \ac{ue} and \ac{ris} are unknown.
    The \ac{ris} is deployed on a drone with gravity sensors, where the \ac{ris} is always perpendicular to the ground; hence, only one orientation angle ($[\mathbf{o}_\text{R}]_3$) needs to be estimated.} 
    }
  \label{fig_system}
\end{figure}

This subsection describes the geometric relationship among the system devices. Here, we only focus on the \ac{los} components of the \ac{ue}-\ac{bs} and the \ac{ue}-\ac{ris}-\ac{bs} paths which are used for localization, while the \ac{nlos} multipath will be discussed in Subsection~\ref{sec_channelModel}.

We consider an uplink \ac{simo} wireless system consisting of a single-antenna \ac{ue}, an $N_{\text{R},1}\times N_{\text{R},2}$-element active \ac{ris}, and an $N_{\text{B},1}\times N_{\text{B},2}$-element \ac{bs} located at $\mathbf{p}_\text{U}\in\mathbb{R}^3$, $\mathbf{p}_\text{R}\in\mathbb{R}^3$, and $\mathbf{p}_\text{B}\in\mathbb{R}^3$, respectively, as shown in Fig.~\ref{fig_system}.
The orientations of the \ac{ris} and the \ac{bs} are denoted as Euler angles $\mathbf{o}_\text{R}\in\mathbb{R}^3$ and $\mathbf{o}_\text{B}\in\mathbb{R}^3$.
Note that the orientations can also be represented by the rotation matrices
$\mathbf{R}_\text{R}\in\text{SO}(3)$ and $\mathbf{R}_\text{B}\in\text{SO}(3)$, which is constrained in the group of \ac{3d} rotations defined as
$
  \text{SO}(3) = \{\mathbf{R}|\mathbf{R}^\mathsf{T}\mathbf{R}=\mathbf{I}_3,\det(\mathbf{R}) = 1\}.
$
The mapping between a rotation matrix and its Euler angles can be found in~\cite{Chen2022Tutorial}.
In this work, the states of the \ac{ue} and the \ac{ris} are the unknowns to be estimated, while the \ac{bs} is used as the reference point in the considered coordinate system with known states. 
Specifically, we consider a scenario where the \ac{ris} is mobile and has vertical attitude adjustment capability, making a single degree of freedom in the \ac{ris}'s orientation.
For example, the \ac{ris} can be deployed on a drone with gravity sensors/accelerometers~\cite{Mitikiri2019Acceleration,Ge2022Reconfigurable}. Then we can assume the \ac{ris} plane to always be perpendicular to the ground (fixed pitch and yaw angles) and only a 1D orientation (roll angle) in the horizontal plane needs to be estimated, as depicted in Fig.~\ref{fig_system}.
Hence, the unknown geometric parameters consist of the 3D position of \ac{ue}, the 3D position, and the 1D orientation of \ac{ris}.

In the \ac{bs}'s \ac{lcs}, the \ac{aoa} for the \ac{los} path and the \ac{ris} reflection path are denoted as $\bm{\theta}_\text{L}$ and $\bm{\theta}_\text{R}$.
Note that each \ac{aoa} pair consists of an azimuth angle and an elevation angle, i.e., $\bm{\theta}_\text{L}=[\theta_\text{L}^\text{az},\theta_\text{L}^\text{el}]^\mathsf{T} $ and $\bm{\theta}_\text{R}=[\theta_\text{R}^\text{az},\theta_\text{R}^\text{el}]^\mathsf{T} $. 
Those angles are related to the geometric parameters as follows:
\begin{align}\label{eq_geo1}
  \theta_\text{L}^\text{az} &= \arctan 2\left([\mathbf{R}_\text{B}^\mathsf{T} (\mathbf{p}_\text{U}-\mathbf{p}_\text{B})]_2, [\mathbf{R}_\text{B}^\mathsf{T} (\mathbf{p}_\text{U}-\mathbf{p}_\text{B})]_1\right), \\
  \theta_\text{L}^\text{el} &= \arcsin\left({[\mathbf{R}_\text{B}^\mathsf{T} (\mathbf{p}_\text{U}-\mathbf{p}_\text{B})]_3}/{\|\mathbf{p}_\text{U}-\mathbf{p}_\text{B}\|_2}\right).
\end{align}
A similar relationship holds for $\theta_\text{R}^\text{az}$ and $\theta_\text{R}^\text{el}$.
Analogously, in the \ac{ris}'s \ac{lcs}, the \ac{aoa} from the \ac{ue} can be denoted as $\bm{\phi}_\text{A}=[\phi_\text{A}^\text{az},\phi_\text{A}^\text{el}]^\mathsf{T} $ and the \ac{aod} towards the \ac{bs} is $\bm{\phi}_\text{D}=[\phi_\text{D}^\text{az},\phi_\text{D}^\text{el}]^\mathsf{T}$.
Furthermore, we assume an unknown clock bias $\Delta\in\mathbb{R}$ exists between the \ac{ue} and \ac{bs}~\cite{Fascista2022RIS,Chen2022Tutorial}.
Therefore, delays over the \ac{ue}-\ac{bs} and the \ac{ue}-\ac{ris}-\ac{bs} paths are given by
\begin{align}
  \tau_\text{L} &= {\|\mathbf{p}_\text{B}-\mathbf{p}_\text{U}\|_2}/{c} + \Delta, \\
  \tau_\text{R} &= {\|\mathbf{p}_\text{R}-\mathbf{p}_\text{U}\|_2/{c} + \|\mathbf{p}_\text{B}-\mathbf{p}_\text{R}\|_2}/{c} + \Delta,
  \label{eq_tau}
\end{align}
where $c$ is the speed of light.

\subsection{Signal Model}
The \ac{bs} observes and processes the uplink signals received through the \ac{ue}-\ac{bs} channel and the \ac{ue}-\ac{ris}-\ac{bs} channel simultaneously. 
We consider the transmission of $G$ \ac{ofdm} pilot symbols with $K$ subcarriers.
The frequency of the $k$th subcarrier is denoted as $f_k= f_c + \frac{(2k-1-K)}{2}\Delta_f, k=1,\dots,K$, where $f_c$ is the carrier frequency, $\Delta_f=B/K$ is the subcarrier spacing, and $B$ is the bandwidth.
With reflection-type amplifiers supported by a power supply, the active \ac{ris} profile at the $g$th transmission is denoted by $\bm{\gamma}_g=[pe^{j\beta_{1,g}},\dots,pe^{j\beta_{N_{\text{R},1}N_{\text{R},2},g}}]^\mathsf{T} \in\mathbb{C}^{N_{\text{R},1}N_{\text{R},2}}$, $g=1, \dots, G$, where $p>1$ denotes the amplification factor and $\beta_{n,g}$ denotes the phase shift of the $n$th \ac{ris} element at the $g$th transmission.
Again, here we assume that each active \ac{ris} element keeps the same amplification factor. The reflection matrix of the active \ac{ris} can be defined as $\bm{\Gamma}_g=\text{diag}(\bm{\gamma}_g)$.
Suppose that all the \ac{bs} antennas are connected to an $N_1\times N_2$ \ac{rfc} array. We assume $N_1>1$ and $N_2>1$.
The received baseband signal for the $g$th transmission and the $k$th subcarrier, $\mathbf{y}_{g,k}\in\mathbb{C}^{N_1N_2\times 1}$, can be expressed as
\begin{equation}\label{eq_y}
  \mathbf{y}_{g,k} = \mathbf{W}^\mathsf{H} \left( \mathbf{h}_\text{L}^kx_{g,k} + \mathbf{H}_{\text{R},2}^k\bm{\Gamma}_g\mathbf{h}_{\text{R},1}^kx_{g,k} + \mathbf{H}_{\text{R},2}^k\bm{\Gamma}_g\mathbf{n}_r + \mathbf{n}_0 \right),
\end{equation}
where $x_{g,k}\in\mathbb{C}$ is the transmitted signal with average transmission power $|x_{g,k}|^2={P_\text{T}}$, $\mathbf{h}_\text{L}^k$ is the \ac{ue}-\ac{bs} channel vector, $\mathbf{h}_{\text{R},1}^k$ is the \ac{ue}-\ac{ris} channel vector, $\mathbf{H}_{\text{R},2}^k$ is the \ac{ris}-\ac{bs} channel matrix, $\mathbf{n}_r\sim\mathcal{CN}(\mathbf{0},\sigma_r^2\mathbf{I})$ denotes the thermal noise introduced in the active \ac{ris}, $\mathbf{n}_0\sim\mathcal{CN}(\mathbf{0},\sigma_0^2\mathbf{I})$ denotes the thermal noise at the receiver, and $\mathbf{W}\in\mathbb{C}^{N_{\text{B},1}N_{\text{B},2}\times N_1N_2}$ is the combiner matrix which will be specified in Subsection~\ref{sec_corase}.
Note that model~\eqref{eq_y} is reduced to the passive \ac{ris} case when $p=1$ and $\sigma_r=0$.

According to \eqref{eq_y}, we have the total received noise for the $g$th transmission and the $k$th subcarrier as
\begin{equation}\label{eq_noisenew}
  \mathbf{n}_{g,k} = \mathbf{W}^\mathsf{H}(\mathbf{H}_{\text{R},2}^k\bm{\Gamma}_g\mathbf{n}_r + \mathbf{n}_0).
\end{equation}
Here we notice the received noise $\mathbf{n}_{g,k}$ is colored and related to the unknown \ac{ris}-\ac{bs} channel $\mathbf{H}_{\text{R},2}^k$.
The statistics of $\mathbf{n}_{g,k}$ will be derived in Subsection~\ref{sec_CRLB1}.
The channel model~\eqref{eq_y} can be rewritten as 
\begin{equation}\label{eq_y_new}
  \mathbf{y}_{g,k} = \mathbf{W}^\mathsf{H} \mathbf{h}_{g,k}x_{g,k} + \mathbf{n}_{g,k},
\end{equation}
where $\mathbf{h}_{g,k} = \mathbf{h}_\text{L}^k + \mathbf{H}_{\text{R},2}^k\bm{\Gamma}_g\mathbf{h}_{\text{R},1}^k$.

\subsection{Channel Model}\label{sec_channelModel}

By introducing the \ac{nlos} multipath, the \ac{ue}-\ac{bs} channel $\mathbf{h}_\text{L}^k$, the \ac{ue}-\ac{ris} channel $\mathbf{h}_{\text{R},1}^k$, and the \ac{ris}-\ac{bs} channel $\mathbf{H}_{\text{R},2}^k$ are given by~\cite{Chen2022Tutorial}
\begin{align}
\label{eq_mulpa1}
  \mathbf{h}_\text{L}^k &= \sum_{i=0}^{I_\text{L}}\alpha_\text{L}^ie^{-j2\pi(k-1)\Delta_f\tau_\text{L}^i}\mathbf{a}_\text{B}(\bm{\theta}_\text{L}^i),\\
  \mathbf{h}_{\text{R},1}^k &= \sum_{i=0}^{I_{\text{R},1}} \alpha_{\text{R},1}^i\mathbf{a}_\text{R}(\bm{\phi}_\text{A}^i),
  \label{eq_mulpa2}\\
  \mathbf{H}_{\text{R},2}^k &= \sum_{i=0}^{I_{\text{R},2}} \alpha_{\text{R},2}^ie^{-j2\pi(k-1)\Delta_f\tau_\text{R}^i}\mathbf{a}_\text{B}(\bm{\theta}_\text{R}^i)\mathbf{a}_\text{R}^\mathsf{T} (\bm{\phi}_\text{D}^i),
\label{eq_mulpa3}
\end{align}
where $\mathbf{a}_\text{B}(\bm{\theta})\in\mathbb{C}^{N_{\text{B},1}N_{\text{B},2}}$ and $\mathbf{a}_\text{R}(\bm{\phi})\in\mathbb{C}^{N_{\text{R},1}N_{\text{R},2}}$ denote the array response vectors of the \ac{bs} and the \ac{ris}, and $\alpha_\text{L}^i$, $\alpha_{\text{R},1}^i$ and $\alpha_{\text{R},2}^i$ are the complex channel gains for the \ac{ue}-\ac{bs}, the \ac{ue}-\ac{ris}, and the \ac{ris}-\ac{bs} channels, respectively. Here, $i=0$ represents the \ac{los} channel and the rest are the \ac{nlos} multipath channels reflected by \acp{sp}. The numbers of the \ac{nlos} paths of the \ac{ue}-\ac{bs}, the \ac{ue}-\ac{ris} and the \ac{ris}-\ac{bs} channels are denoted as $I_\text{L}$, $I_{\text{R},1}$, and $I_{\text{R},2}$.

In this work, the multipath components are not used to perform localization, thus, it acts as a negative effect that generates additional noise. Given that, the algorithm development in this paper is based on the multipath-free model; however, the impact of the multipath effect will be evaluated in Section~\ref{sec_sims}.
Ignoring the multipath, the channel models~\eqref{eq_mulpa1}--\eqref{eq_mulpa3} are reduced to
\begin{align}
  \mathbf{h}_\text{L}^k &= \alpha_\text{L}e^{-j2\pi(k-1)\Delta_f\tau_\text{L}}\mathbf{a}_\text{B}(\bm{\theta}_\text{L}),\label{eq_chan1}\\
  \mathbf{h}_{\text{R},1}^k &= \alpha_{\text{R},1}\mathbf{a}_\text{R}(\bm{\phi}_\text{A}),\\
  \mathbf{H}_{\text{R},2}^k &= \alpha_{\text{R},2}e^{-j2\pi(k-1)\Delta_f\tau_\text{R}}\mathbf{a}_\text{B}(\bm{\theta}_\text{R})\mathbf{a}_\text{R}^\mathsf{T} (\bm{\phi}_\text{D}),\label{eq_chan3}
\end{align}
where $\alpha_\text{L}$, $\alpha_{\text{R},1}$ and $\alpha_{\text{R},2}$ are the complex channel gains for the corresponding \ac{los} channels. 
The array response vectors of the \ac{bs} and the \ac{ris} are defined as 
\begin{align}\label{eq_ars}
  \left[\mathbf{a}_\text{B}(\bm{\theta})\right]_i &= e^{j\frac{2\pi f_c}{c}\mathbf{t}(\bm{\theta})^\mathsf{T}\mathbf{p}_{\text{B},i}},\\
  \left[\mathbf{a}_\text{R}(\bm{\phi})\right]_i &= e^{j\frac{2\pi f_c}{c}\mathbf{t}(\bm{\phi})^\mathsf{T}\mathbf{p}_{\text{R},i}},
\end{align}
where $\mathbf{p}_{\text{B},i}$ and $\mathbf{p}_{\text{R},i}$ are respectively the positions of the $i$th element of the \ac{bs} and the \ac{ris} given in their \ac{lcs}, and $\mathbf{t}(\bm{\theta})$ is the direction vector defined as 
$
  \mathbf{t}(\bm{\theta}) \triangleq 
\begin{bmatrix}
  \cos(\theta^{\text{az}})\cos(\theta^{\text{el}}),
  \sin(\theta^{\text{az}})\cos(\theta^{\text{el}}), 
  \sin(\theta^{\text{el}})
\end{bmatrix}^\mathsf{T} .
$
For later derivation, we assume both the antenna arrays in the \ac{bs} and the \ac{ris} to be \acp{upa} and their element spacings are denoted as $d_\text{B}$ and $d_\text{R}$.

We assume that all the \ac{bs} and \ac{ris} elements are deployed on the YOZ plane of their \acp{lcs}, i.e., the $x$-coordinates of these elements' positions are zeros.
Since the first entry of the $\mathbf{p}_{\text{R},i}$ is zero,  we introduce the intermediate \ac{ris}-related angles $\bm{\vartheta}=[\vartheta_2,\vartheta_3]^\mathsf{T} $ as~\cite{Lu2022Joint}
  \begin{align}\label{eq_vartheta2}
    \vartheta_2 &= \sin(\phi_\text{A}^\text{az})\cos(\phi_\text{A}^\text{el}) + \sin(\phi_\text{D}^\text{az})\cos(\phi_\text{D}^\text{el}),\\
    \vartheta_3 &= \sin(\phi_\text{A}^\text{el}) + \sin(\phi_\text{D}^\text{el}).\label{eq_vartheta3}
  \end{align}
  Then, the total \ac{ris} array response for both signal arrival and departure can be represented through $\vartheta_2$ and $\vartheta_3$ as 
$
    \left[\mathbf{a}_\text{R}(\bm{\phi}_\text{A})\odot\mathbf{a}_\text{R}(\bm{\phi}_\text{D})\right]_i = e^{j\frac{2\pi f_c}{c}\left(\mathbf{t}(\bm{\phi}_\text{A})+\mathbf{t}(\bm{\phi}_\text{D})\right)^\mathsf{T}\mathbf{p}_{\text{R},i}}
    = e^{j\frac{2\pi f_c}{c}\left(\vartheta_2[\mathbf{p}_{\text{R},i}]_2 + \vartheta_3[\mathbf{p}_{\text{R},i}]_3\right)  }.
$
Thus, a more compact formulation for the \ac{ue}-\ac{ris}-\ac{bs} channel can be represented as
  \begin{multline}
    \mathbf{H}_{\text{R},2}^k\bm{\Gamma}_g\mathbf{h}_{\text{R},1}^k =\alpha_\text{R}\left[(\mathbf{a}_\text{R}(\bm{\phi}_\text{A})\odot\mathbf{a}_\text{R}(\bm{\phi}_\text{D}))^\mathsf{T}\bm{\gamma}_g\right]\\ \times e^{-j2\pi(k-1)\Delta_f\tau_\text{R}}\mathbf{a}_\text{B}(\bm{\theta}_\text{R}),
  \end{multline}
where $\alpha_\text{R} = \alpha_{\text{R},1}\alpha_{\text{R},2}$.

\subsection{Localizability Analysis and Problem Formulation}\label{sec_localizability}

This work aims to jointly estimate the \ac{ris} position and orientation, the \ac{ue} position, and the clock bias based on the received signals.  
We adopt a two-stage estimation framework consisting of a channel parameters estimation followed by a \ac{ris} and \ac{ue} states estimation from the obtained channel parameters.
Based on~\eqref{eq_chan1}--\eqref{eq_chan3}, we define the vector $\bm{\eta}_\text{ch}$ of \textit{all} unknown channel parameters  and the vector $\bm{\eta}$ that contains only localization-related channel parameters as
\begin{align}
  \bm{\eta}_\text{ch} &\triangleq [\bm{\eta}^\mathsf{T}, \mathfrak{R}(\alpha_\text{L}),\mathfrak{I}(\alpha_\text{L}),\mathfrak{R}(\alpha_\text{R}),\mathfrak{I}(\alpha_\text{R})]^\mathsf{T}\in\mathbb{R}^{12}, \\
  \bm{\eta} &\triangleq [\theta_\text{L}^{\text{az}},\theta_\text{L}^{\text{el}},\theta_\text{R}^{\text{az}},\theta_\text{R}^{\text{el}},\tau_\text{L},\tau_\text{R},\vartheta_2,\vartheta_3]^\mathsf{T}\in\mathbb{R}^8.\label{eq_eta}
\end{align}
Note that the parameters $\alpha_\text{L}$ and $\alpha_\text{R}$ that contribute to $\bm{\eta}_\text{ch}$ are nuisance parameters that will not be used for solving the \ac{jrcup} problem.
As an objective of the general \ac{jrcup} problem, the localization parameter vector that contains the \ac{ris} and \ac{ue} states is defined as
\begin{equation}
    \bm{\xi}_{\text{Loc}} \triangleq [\mathbf{p}_\text{U}^\mathsf{T} ,\mathbf{p}_\text{R}^\mathsf{T} ,\mathbf{o}_\text{R}^\mathsf{T},\Delta]^\mathsf{T}\in\mathbb{R}^{10}.
\end{equation}
The \ac{jrcup} problem refers to using estimates of the channel parameters $\bm{\eta}$ to determine the state vector $\bm{\xi}_{\text{Loc}}$.
However, in an estimation problem, the number of unknowns cannot exceed the number of observations $\bm{\eta}$.
In the general case, it is unlikely to find a unique value of $\bm{\xi}_{\text{Loc}}$ based solely on the observations $\bm{\eta}$.
The restriction of the \ac{ris} orientation as highlighted in Subsection~\ref{sec_geomodel} reduces the unknowns pertaining to the \ac{ris} orientation to a single parameter, making the number of observations equal to the number of unknowns.\footnote{In cases where the full 3D orientation of the \ac{ris} needs to be estimated, multiple \acp{ue} can be utilized to obtain more channel parameters. For example, two \acp{ue} at different locations can provide 16 localization-related channel parameters, and the dimension of the unknowns $\bm{\xi}_{\text{Loc}}$ becomes 14 (with one more \ac{ue} position and clock bias), making the problem solvable.}
Consequently, the localization parameters to be estimated in this work can be redefined as
\begin{equation}
  \bm{\xi} \triangleq [\mathbf{p}_\text{U}^\mathsf{T} ,\mathbf{p}_\text{R}^\mathsf{T} ,o_3,\Delta]^\mathsf{T}\in\mathbb{R}^8,
\end{equation}
where $o_3=[\mathbf{o}_\text{R}]_3$ is the Euler angles of the \ac{ris} orientation around the $Z$-axis; the rest of Euler angles (i.e., $o_1$ and $o_2$) are assumed to be known.

In this work, we focus on developing a solution to the \ac{jrcup} problem with a single \ac{ue} and 1D \ac{ris} orientation.
When the \ac{bs} received the \ac{ofdm} symbols from the \ac{ue} through both \ac{los} and \ac{ris} reflected channels, we first estimate $\bm{\eta}$ based on the received signals $\mathbf{y}_{g,k}, k=1,\dots,K, g=1,\dots,G$. 
Afterwards, $\bm{\xi}$ is estimated based on $\bm{\eta}$.


\section{Mathematical Preliminaries}\label{sec_mathpre}

To make the paper self-contained, we provide a brief review of the \ac{cp} decomposition of tensors and the fundamentals of the \ac{esprit} method.
More details on these topics can be found in~\cite{Cichocki2015Tensor,Sidiropoulos2017Tensor,Haardt2008Higher,Roy1989Esprit}.

\subsection{Tensors \& \ac{cp} Decomposition}
\label{sec_tensor}

A tensor, also known as \emph{multi-way array}, is a generalization of data arrays to three or higher dimensions.
Let $\bm{\mathcal{T} }\in\mathbb{C}^{I_1\times I_2\times\dots\times I_N}$ denote an $N$th-order tensor.
The \emph{order} $N$ indicates the number of dimensions, and each dimension is called a \emph{mode}.

The polyadic decomposition approximates a tensor with a sum of $R$ rank-one tensors.
If the number of rank-one terms $R$ is minimum, the corresponding decomposition is called a \ac{cp} decomposition and the minimum achievable $R$ is referred to as the \emph{rank} of the tensor.
Suppose $\bm{\mathcal{T}}$ is a rank-$R$ tensor, the \ac{cp} decomposition decomposes $\bm{\mathcal{T} }$ as 
\begin{equation}\label{eq_cpd}
  \bm{\mathcal{T}} = \sum_{r=1}^R \lambda_r \mathbf{u}_r^{(1)}\circ\mathbf{u}_r^{(2)}\circ\dots\circ\mathbf{u}_r^{(N)}.
\end{equation}
Here, $[\mathbf{u}^{(n)}_1,\dots,\mathbf{u}^{(n)}_R]=\mathbf{U}^{(n)}\in\mathbb{C}^{I_n\times R}$ is the factor matrix along the $n$th mode. Each matrix $\mathbf{U}^{(n)}$ has orthonormal columns.
A visual representation of \eqref{eq_cpd} in the third-order case ($N=3$) is shown in Fig.~\ref{fig_cpd}.

\begin{figure}[t]
  \centering
  \includegraphics[width=3in]{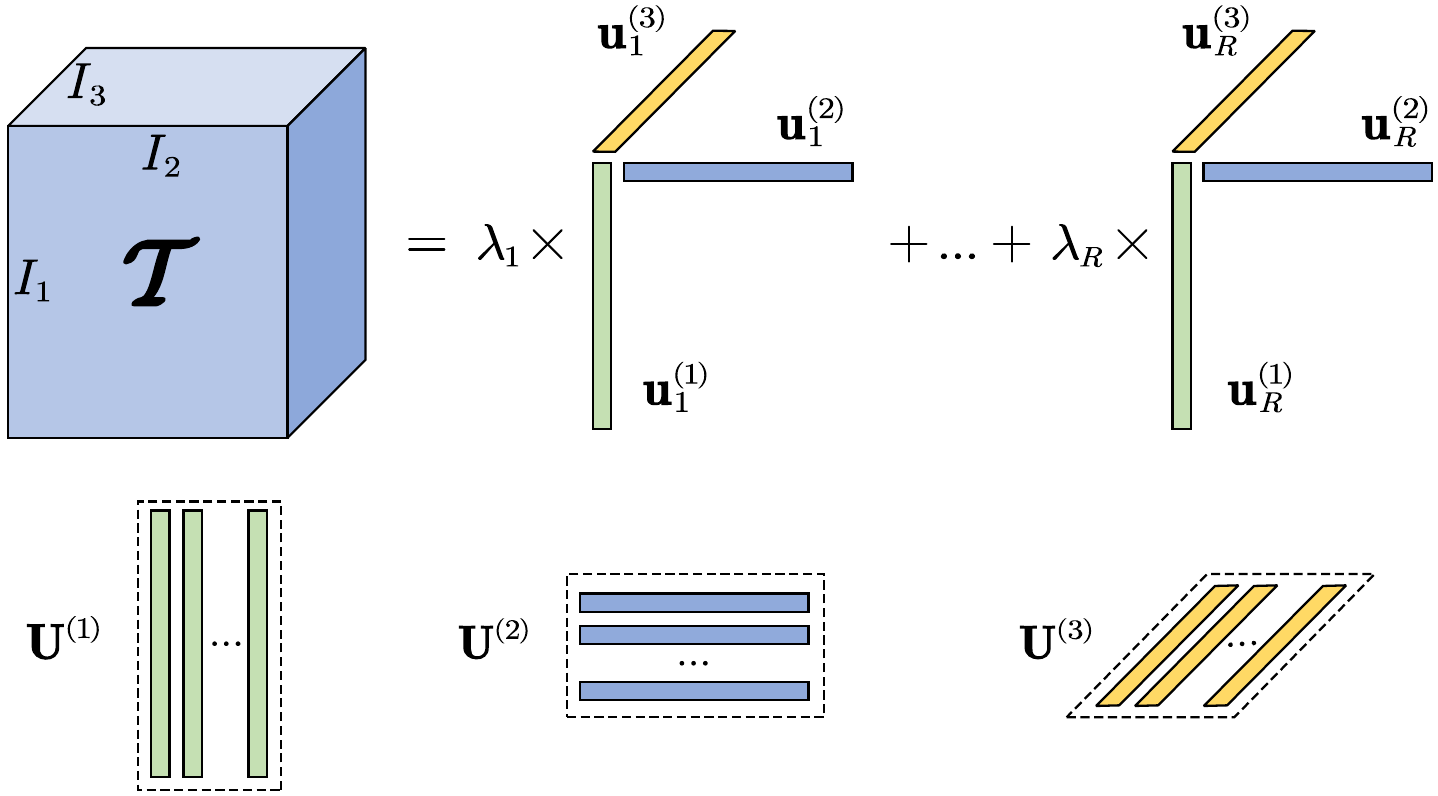}
  \caption{ 
    A \ac{cp} decomposition of a rank-$R$ third-order tensor. 
    }
  \label{fig_cpd}
\end{figure}

\subsection{Tensor-\ac{esprit}}
\ac{esprit} is a search-free signal subspace-based parameter estimation technique for multidimensional harmonic retrieval, which has been widely used in channel estimation~\cite{Zhang2021Gridless}, spectrum sensing~\cite{Jiang2022Joint}, sub-Nyquist sampling~\cite{Fu2020Short}, etc.
This subsection recaps two \ac{esprit} variants, namely, element-space and beamspace tensor \ac{esprit}.

\vspace{1em}
\noindent\textbf{Element-Space Tensor \ac{esprit}:}\
Consider a multidimensional harmonic retrieval problem with an observation tensor $\bm{\mathcal{T}}$ whose elements are given by
\begin{equation}\label{eq_obs}
  [\bm{\mathcal{T}}]_{i_1,i_2,\dots,i_N} = \sum_{r=1}^R\lambda_r\prod_{n=1}^{N}e^{j(i_n-1)\omega_{n,r}}.
\end{equation}
To estimate the unknown angular frequencies $\omega_{n,r}$, we define 
\begin{align}\label{eq_a}
  &\mathbf{a}^{(I_n)}(\omega_{n,r}) = [1,e^{j\omega_{n,r}},\dots,e^{j(I_n-1)\omega_{n,r}}]^\mathsf{T}\in\mathbb{C}^{I_n},\\
  &\mathbf{A}_n = \big[ \mathbf{a}^{(I_n)}(\omega_{n,1}),\dots,\mathbf{a}^{(I_n)}(\omega_{n,R}) \big]\in\mathbb{C}^{I_n\times R}.
\end{align}
Then~\eqref{eq_obs} can be rewritten as 
\begin{equation}\label{eq_m1}
  \bm{\mathcal{T}} = \sum_{r=1}^R \lambda_r \mathbf{a}^{(I_1)}(\omega_{1,r})\circ\mathbf{a}^{(I_2)}(\omega_{2,r})\circ\dots\circ\mathbf{a}^{(I_N)}(\omega_{N,r}).
\end{equation}
We define two selection matrices
$
  \mathbf{J}_{n,1} = \left[\mathbf{I}_{I_n-1},\mathbf{0}_{(I_n-1)\times 1}\right] \in \mathbb{R}^{(I_n-1)\times I_n},\ 
  \mathbf{J}_{n,2} = \left[\mathbf{0}_{(I_n-1)\times 1},\mathbf{I}_{I_n-1}\right] \in \mathbb{R}^{(I_n-1)\times I_n},
$
and let $\bm{\Phi}_n=\text{diag}\left\{[e^{j\omega_{n,1}},\dots,e^{j\omega_{n,R}}]^\mathsf{T}\right\}\in\mathbb{C}^{R\times R}$.
To estimate the angular frequency, ESPRIT relies on the shift-invariance property of $\mathbf{A}_{n}$, which is given by
\begin{equation}\label{eq_sip}
  \mathbf{J}_{n,1}\mathbf{A}_{n}\bm{\Phi}_n = \mathbf{J}_{n,2}\mathbf{A}_{n}.
\end{equation}
Then the associated frequencies $\omega_{n,r},r=1,\dots,R,n=1,\dots,N$, can be estimated as follows. By applying the \ac{cp} decomposition to $\bm{\mathcal{T}}$, we obtain the factor matrices $\mathbf{U}^{(n)}, n=1,\dots,N$.
Since $\mathbf{a}^{(I_n)}(\omega_{n,r})$ lies in the column space of $\mathbf{U}^{(n)}$, we replace $\mathbf{A}_n$ by $\mathbf{A}_n = \mathbf{U}^{(n)}\mathbf{D}_n$, where $\mathbf{D}_n\in\mathbb{C}^{R\times R}$ is a non-singular matrix.
Then~\eqref{eq_sip} becomes
\begin{equation}\label{eq_sip2}
  \mathbf{J}_{n,1}\mathbf{U}^{(n)}\bm{\Theta}_n = \mathbf{J}_{n,2}\mathbf{U}^{(n)},
\end{equation}
where $\bm{\Theta}_n=\mathbf{D}_n\bm{\Phi}_n\mathbf{D}_n^{-1}$.
Based on~\eqref{eq_sip2}, the \ac{ls} estimate of $\bm{\Theta}_n$ reads~\cite{Haardt2008Higher} 
\begin{equation}\label{eq_estTheta}
  \hat{\bm{\Theta}}_n = \big(\mathbf{J}_{n,1}\mathbf{U}^{(n)}\big)^\dagger\mathbf{J}_{n,2}\mathbf{U}^{(n)}.
\end{equation}
Since the $r$th eigenvalue of ${\bm{\Theta}}_n$ is given by $e^{j\omega_{n,r}}$, the angular frequencies $\hat{\omega}_{n,r},r=1,\dots,R$, can be obtained from the phase components of the eigenvalues of $\hat{\bm{\Theta}}_n$.

\vspace{1em}
\noindent\textbf{Beamspace Tensor \ac{esprit}:}\
For beamspace measurements, after the $n$th-mode product of $\bm{\mathcal{T}}$ with a linear transformation matrix~\cite{Kolda2009Tensor}, the model~\eqref{eq_m1} is modified to
\begin{equation}
  \bm{\mathcal{T}}_b = \sum_{r=1}^R \lambda_r \mathbf{b}^{(J_1)}(\omega_{1,r})\circ\mathbf{b}^{(J_2)}(\omega_{2,r})\circ\dots\circ\mathbf{b}^{(J_N)}(\omega_{N,r}),
\end{equation}
where $\mathbf{b}^{(J_n)}(\omega_{n,r}) = \mathbf{T}_n^\mathsf{H}\mathbf{a}^{(I_n)}(\omega_{n,r}) $ with $\mathbf{T}_n\in\mathbb{C}^{I_n\times J_n}$.
The beamspace array manifold becomes $\mathbf{B}_n=\big[ \mathbf{b}^{(J_n)}(\omega_{n,1}),\dots,\mathbf{b}^{(J_n)}(\omega_{n,R}) \big]\in\mathbb{C}^{J_n\times R}$ 
and $\bm{\mathcal{T}}_b\in\mathbb{C}^{J_1\times J_2\times\dots\times J_n}$.

In the beamspace model with the transformation matrix $\mathbf{T}_n$, generally the shift-invariance property~\eqref{eq_sip} does not hold, i.e., $\mathbf{J}_{n,1}\mathbf{B}_{n}\bm{\Phi}_n \neq \mathbf{J}_{n,2}\mathbf{B}_{n}$.
However, if $\mathbf{T}_n$ has a proper structure, the following proposition~\cite{Wen2020Tensor,Jiang2021Beamspace} shows that the lost shift-invariance property can be restored.

\begin{proposition}
  Assume that $\mathbf{T}_n$ has a shift-invariant structure that satisfies
  $
    \mathbf{J}_{n,1}\mathbf{T}_{n} = \mathbf{J}_{n,2}\mathbf{T}_{n}\mathbf{F}_n
  $
  where $\mathbf{F}_n\in\mathbb{C}^{J_n\times J_n}$ is a non-singular matrix and $\mathbf{T}_n^\mathsf{H} = [\mathbf{t}_1,\dots,\mathbf{t}_{I_n}]^\mathsf{T} $.
  If there exists a matrix $\mathbf{Q}_n\in\mathbb{C}^{J_n\times J_n}$ such that 
\begin{equation}
  \begin{cases}
  \mathbf{Q}_n\mathbf{t}_{I_n} = \mathbf{0}_{J_n\times 1},\\
  \mathbf{Q}_n\mathbf{F}_n^\mathsf{H} \mathbf{t}_1 = \mathbf{0}_{J_n\times 1},
  \end{cases}
\end{equation}
then we have 
\begin{equation}\label{eq_rsip}
  \mathbf{Q}_n\mathbf{B}_n\bm{\Phi}_n^\mathsf{H} = \mathbf{Q}_n\mathbf{F}_n^\mathsf{H}\mathbf{B}_n,
\end{equation}
where the shift-invariance property is restored.
\end{proposition}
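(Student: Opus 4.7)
The approach is to transport the element-space shift-invariance $\mathbf{J}_{n,1}\mathbf{A}_n\bm{\Phi}_n = \mathbf{J}_{n,2}\mathbf{A}_n$ through the linear map $\mathbf{T}_n^\mathsf{H}$ (recall that $\mathbf{B}_n = \mathbf{T}_n^\mathsf{H}\mathbf{A}_n$), using the assumed structure of $\mathbf{T}_n$ to commute $\mathbf{F}_n^\mathsf{H}$ past $\mathbf{T}_n^\mathsf{H}$; the projection $\mathbf{Q}_n$ absorbs the boundary residuals that would otherwise obstruct this commutation.

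First I would take the conjugate transpose of the structural hypothesis to obtain $\mathbf{T}_n^\mathsf{H}\mathbf{J}_{n,1}^\mathsf{T} = \mathbf{F}_n^\mathsf{H}\mathbf{T}_n^\mathsf{H}\mathbf{J}_{n,2}^\mathsf{T}$. Reading this columnwise (with $\mathbf{t}_i$ denoting the $i$-th column of $\mathbf{T}_n^\mathsf{H}$), one obtains the recurrence $\mathbf{F}_n^\mathsf{H}\mathbf{t}_{i+1} = \mathbf{t}_i$ for $i = 1,\ldots,I_n-1$, leaving exactly two quantities---$\mathbf{F}_n^\mathsf{H}\mathbf{t}_1$ and $\mathbf{t}_{I_n}$---unconstrained by the recurrence. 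These are precisely the objects that the two hypotheses on $\mathbf{Q}_n$ annihilate.

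Next, using the orthogonal splittings $\mathbf{I}_{I_n} = \mathbf{J}_{n,1}^\mathsf{T}\mathbf{J}_{n,1} + \mathbf{e}_{I_n}\mathbf{e}_{I_n}^\mathsf{T}$ and $\mathbf{I}_{I_n} = \mathbf{J}_{n,2}^\mathsf{T}\mathbf{J}_{n,2} + \mathbf{e}_1\mathbf{e}_1^\mathsf{T}$, I would expand $\mathbf{T}_n^\mathsf{H}\mathbf{A}_n\bm{\Phi}_n$ into a ``bulk'' piece to which element-space shift-invariance applies directly, plus a boundary piece. Applying $\mathbf{J}_{n,1}\mathbf{A}_n\bm{\Phi}_n = \mathbf{J}_{n,2}\mathbf{A}_n$ to the bulk and then using the column identity above to push $\mathbf{F}_n^\mathsf{H}$ to the left of $\mathbf{T}_n^\mathsf{H}$ will yield an expression of the form
\begin{equation*}
\mathbf{B}_n\bm{\Phi}_n - \mathbf{F}_n^\mathsf{H}\mathbf{B}_n \;=\; \mathbf{t}_{I_n}\mathbf{e}_{I_n}^\mathsf{T}\mathbf{A}_n\bm{\Phi}_n \;-\; \mathbf{F}_n^\mathsf{H}\mathbf{t}_1\mathbf{e}_1^\mathsf{T}\mathbf{A}_n,
\end{equation*}
so that the only obstruction to shift-invariance in the beamspace is the pair of rank-one terms carrying exactly the two unconstrained boundary quantities identified above.

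Finally, left-multiplying by $\mathbf{Q}_n$ and invoking the hypotheses $\mathbf{Q}_n\mathbf{t}_{I_n} = \mathbf{0}$ and $\mathbf{Q}_n\mathbf{F}_n^\mathsf{H}\mathbf{t}_1 = \mathbf{0}$ kills both residual terms and delivers the identity of the proposition (the $\mathsf{H}$ on $\bm{\Phi}_n$ in the stated form reflects the sign convention chosen for $\omega_{n,r}$ in $\mathbf{a}^{(I_n)}$ and does not affect the argument). The main obstacle will be the boundary bookkeeping in the middle step: one has to line up the two leftover rank-one factors with exactly $\mathbf{t}_{I_n}$ and $\mathbf{F}_n^\mathsf{H}\mathbf{t}_1$---not with any other columns of $\mathbf{T}_n^\mathsf{H}$---so that the two given nullity conditions on $\mathbf{Q}_n$ are sharp enough to clear both residuals. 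Once this splitting is chosen correctly, the remainder is routine linear algebra.
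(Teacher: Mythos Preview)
Your argument is correct and is the standard one: use the transposed structural relation to get the column recurrence $\mathbf{F}_n^\mathsf{H}\mathbf{t}_{i+1}=\mathbf{t}_i$, split off the two boundary rank-one terms via $\mathbf{I}=\mathbf{J}_{n,1}^\mathsf{T}\mathbf{J}_{n,1}+\mathbf{e}_{I_n}\mathbf{e}_{I_n}^\mathsf{T}$ and $\mathbf{I}=\mathbf{J}_{n,2}^\mathsf{T}\mathbf{J}_{n,2}+\mathbf{e}_{1}\mathbf{e}_{1}^\mathsf{T}$, and kill them with $\mathbf{Q}_n$. The paper itself does not give a proof---it simply refers to Appendix~A of~\cite{Wen2018Tensor}---so there is nothing to compare against beyond noting that your derivation is exactly the computation that citation points to. Your observation that the derivation yields $\mathbf{Q}_n\mathbf{B}_n\bm{\Phi}_n=\mathbf{Q}_n\mathbf{F}_n^\mathsf{H}\mathbf{B}_n$ rather than the stated $\bm{\Phi}_n^\mathsf{H}$ version is also correct; with the paper's explicit definition $\mathbf{a}^{(I_n)}(\omega)=[1,e^{j\omega},\dots]^\mathsf{T}$ this is a typo in the proposition (harmless, since $\bm{\Phi}_n^\mathsf{H}=\bm{\Phi}_n^{-1}$ merely flips the sign of the recovered frequencies), not a convention issue in $\mathbf{a}^{(I_n)}$ itself.
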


\begin{proof}
  See Appendix A in~\cite{Wen2018Tensor}.
\end{proof}

The matrices $\mathbf{Q}_n$ and $\mathbf{F}_n$ can be estimated as~\cite{Wen2020Tensor}
\begin{align}\label{eq_Q}
  \hat{\mathbf{Q}}_n &= \mathbf{I}_{J_n} - \mathbf{t}_{I_n}\mathbf{t}_{I_n}^\mathsf{H} - \left(\mathbf{F}_n^\mathsf{H}\mathbf{t}_1 \right)\left(\mathbf{F}_n^\mathsf{H}\mathbf{t}_1 \right)^\mathsf{H},\\
  \hat{\mathbf{F}}_n &= \left(\mathbf{J}_{n,2}\mathbf{T}_{n}\right)^\dagger \mathbf{J}_{n,1}\mathbf{T}_{n}. \label{eq_F}
\end{align}
Then, similarly, after obtaining $\mathbf{U}^{(n)}$ by applying the \ac{cp} decomposition, we can replace $\mathbf{B}_n$ by $\mathbf{B}_n=\mathbf{U}^{(n)}\mathbf{D}_n$, thus~\eqref{eq_rsip} becomes
$
  \mathbf{Q}_n\mathbf{U}^{(n)}\bm{\Theta}_n = \mathbf{Q}_n\mathbf{F}_n^\mathsf{H}\mathbf{U}^{(n)}.
$
Therefore, 
\begin{equation}\label{eq_estTheta2}
  \hat{\bm{\Theta}}_n = \big(\mathbf{Q}_{n}\mathbf{U}^{(n)}\big)^\dagger\mathbf{Q}_{n}\mathbf{F}_n^\mathsf{H} \mathbf{U}^{(n)}.
\end{equation}
Finally, the unknown angular frequencies are obtained from the phase of the eigenvalues of $\hat{\bm{\Theta}}_n$.


\section{Proposed Localization Method}\label{sec_algos}
This section develops a two-stage localization method to solve the \ac{jrcup} problem formulated in Subsection~\ref{sec_localizability} based on the preliminaries in Section~\ref{sec_mathpre}.

\subsection{Coarse Channel Estimation via Tensor-\ac{esprit}}\label{sec_corase}

Given $\mathbf{y}_{g,k}$ in~(10), we can obtain the beamspace channel estimates by multiplying the received signals by the conjugate of the pilot symbols and dividing them by the average power as
\begin{align}\label{eq_geth}
  \hat{\mathbf{h}}_{g,k}^{(\text{b})} = \frac{\mathbf{y}_{g,k}x_{g,k}^*}{P_\text{T}}
  &= {\mathbf{W}^\mathsf{H} \mathbf{h}_{g,k}} + \Delta{\mathbf{h}}_{g,k}^{(\text{b})},
\end{align}
where $\Delta{\mathbf{h}}_{g,k}^{(\text{b})}$ is the noise term. We organize the beamspace channels as 
\begin{align}\label{eq_geth2}
  \hat{\mathbf{H}} &= 
  \begin{bmatrix}
    \hat{\mathbf{h}}_{1,1}^{(\text{b})} & \cdots & \hat{\mathbf{h}}_{G,1}^{(\text{b})}\\
    \vdots & \ddots & \vdots \\
    \hat{\mathbf{h}}_{1,K}^{(\text{b})} & \cdots & \hat{\mathbf{h}}_{G,K}^{(\text{b})}
  \end{bmatrix} = [\hat{\mathbf{h}}_{1}^{(\text{b})},\dots,\hat{\mathbf{h}}_{G}^{(\text{b})}]\in\mathbb{C}^{KN_1N_2\times G}.
\end{align}
The goal of this subsection is to estimate the localization-related channel parameters in $\bm{\eta}$ defined in~\eqref{eq_eta} given $\hat{\mathbf{H}}$.

To apply the tensor-\ac{esprit} method, we design the combiner matrix $\mathbf{W}$ and the total \ac{ris} profile matrix $\bm{\Gamma}$ to follow the structure
\begin{align}
  \mathbf{W} &= \mathbf{T}_1\otimes\mathbf{T}_2\in\mathbb{C}^{N_{\text{B},1}N_{\text{B},2}\times N_1N_2},
  \label{eq_Wconstraint}\\
  \bm{\Upsilon} &= [\bm{\gamma}_1,\bm{\gamma}_2,\dots,\bm{\gamma}_G]
  = \mathbf{T}_3\otimes\mathbf{T}_4\in\mathbb{C}^{N_{\text{R},1}N_{\text{R},2}\times G},
  \label{eq_Gammaconstraint}
\end{align}
where $\mathbf{T}_1\in\mathbb{C}^{N_{\text{B},1}\times N_1}$, $\mathbf{T}_2\in\mathbb{C}^{N_{\text{B},2}\times N_2}$, $\mathbf{T}_3\in\mathbb{C}^{N_{\text{R},1}\times \sqrt{G}}$, and $\mathbf{T}_4\in\mathbb{C}^{N_{\text{R},2}\times \sqrt{G}}$ with $G$ thus set to a square number.
Note that we utilize a fixed combiner $\mathbf{W}$ over different transmissions while the \ac{ris} profile $\bm{\gamma}_g$ changes with $g$.
Considering the structure of the array response vectors in~\eqref{eq_ars}, and according to~\eqref{eq_a}, we can write
\begin{align}
  \mathbf{a}_\text{B}(\bm{\theta}_\text{L}) &= \mathbf{a}^{(N_{\text{B},1})}(\omega_{\text{L},1}) \otimes \mathbf{a}^{(N_{\text{B},2})}(\omega_{\text{L},2}),\\
  \mathbf{a}_\text{B}(\bm{\theta}_\text{R}) &= \mathbf{a}^{(N_{\text{B},1})}(\omega_{\text{R},1}) \otimes \mathbf{a}^{(N_{\text{B},2})}(\omega_{\text{R},2}),
\end{align}
where 
\begin{align}\label{eq_wL}
  \omega_{\text{L},1} &= ({2\pi f_cd_\text{B}}/{c})\sin(\theta_\text{L}^\text{az})\cos(\theta_\text{L}^\text{el}),\\
  \omega_{\text{L},2} &= ({2\pi f_cd_\text{B}}/{c})\sin(\theta_\text{L}^\text{el}),\\
  \omega_{\text{R},1} &= ({2\pi f_cd_\text{B}}/{c})\sin(\theta_\text{R}^\text{az})\cos(\theta_\text{R}^\text{el}),\\
  \omega_{\text{R},2} &= ({2\pi f_cd_\text{B}}/{c})\sin(\theta_\text{R}^\text{el}). \label{eq_wR}
\end{align}
Furthermore, we define
\begin{equation}\label{eq_wtau}
  \omega_{\tau_\text{L}} = -2\pi\Delta_f\tau_\text{L},\quad \omega_{\tau_\text{R}} = -2\pi\Delta_f\tau_\text{R}.
\end{equation}

\subsubsection{Estimating $\theta_\text{L}^{\text{az}},\theta_\text{L}^{\text{el}},\theta_\text{R}^{\text{az}},\theta_\text{R}^{\text{el}},\tau_\text{L},\tau_\text{R}$}
We first use the sum of the channels over $g=1,\dots,G$, which is 
$
  \hat{\mathbf{h}}^{(\text{b})} = \sum_{g=1}^G \hat{\mathbf{h}}_{g}^{(\text{b})}\in\mathbb{C}^{KN_1N_2}.
$
Suppose $\hat{\mathbf{h}}^{(\text{b})} = \mathbf{h}^{(\text{b})} + \Delta{\mathbf{h}}^{(\text{b})}$ with $\mathbf{h}^{(\text{b})}$ represents the true beamspace channels.
The true beamspace channel $\mathbf{h}^{(\text{b})}$ can be naturally represented by a tensor $\bm{\mathcal{H}}^{{(b)}}\in \mathbb{C}^{K\times N_1\times N_2}$ as 
\begin{multline}\label{eq_bH}
  \bm{\mathcal{H}}^{{(b)}} = G\alpha_\text{L}\mathbf{a}^{(K)}(\omega_{\tau_\text{L}})\circ\mathbf{T}_1^\mathsf{H} \mathbf{a}^{(N_{\text{B},1})}(\omega_{\text{L},1})\circ\mathbf{T}_2^\mathsf{H} \mathbf{a}^{(N_{\text{B},2})}(\omega_{\text{L},2}) \\ + 
  (\sum_{g=1}^G\beta_{\text{R},g})\mathbf{a}^{(K)}(\omega_{\tau_\text{R}})\circ\mathbf{T}_1^\mathsf{H} \mathbf{a}^{(N_{\text{B},1})}(\omega_{\text{R},1})\circ\mathbf{T}_2^\mathsf{H} \mathbf{a}^{(N_{\text{B},2})}(\omega_{\text{R},2}),
\end{multline}
where 
$
  {{\beta}_{\text{R},g}} = \alpha_{\text{R}}\left[(\mathbf{a}_\text{R}(\bm{\phi}_\text{A})\odot\mathbf{a}_\text{R}(\bm{\phi}_\text{D}))^\mathsf{T}\bm{\gamma}_g\right].
$
We can see that $\bm{\mathcal{H}}^{(\text{b})}$ is a rank-two third-order tensor based on the definition in Subsection~\ref{sec_tensor}, as it is the sum of two rank-one tensors (i.e., the UE-BS channel and the UE-RIS-BS channel) and each of them is an outer product of three vectors. The first mode of $\bm{\mathcal{H}}^{(\text{b})}$ lies in the element-space while the rest of modes lie in the beamspace generated by the transformation matrices $\mathbf{T}_1$ and $\mathbf{T}_2$.
Further, we can define $\bm{\Phi}_1=\text{diag}\left\{[e^{j\omega_{\tau_\text{L}}},e^{j\omega_{\tau_\text{R}}}]^\mathsf{T}\right\}$,
$\bm{\Phi}_2=\text{diag}\left\{[e^{j\omega_{\text{L},1}},e^{j\omega_{\text{R},1}}]^\mathsf{T}\right\}$, 
and $\bm{\Phi}_3=\text{diag}\left\{[e^{j\omega_{\text{L},2}},e^{j\omega_{\text{R},2}}]^\mathsf{T}\right\}$.

To estimate the underlying angular frequencies, we first apply the \ac{cp} decomposition to $\hat{\bm{\mathcal{H}}}$ (generated from $\hat{\mathbf{h}}^{(\text{b})}$) and obtain the factor matrices $\mathbf{U}^{(n)},n=1,2,3$.
Then $\bm{\Theta}_1$ can be estimated by~\eqref{eq_estTheta} while $\bm{\Theta}_2$ and $\bm{\Theta}_3$ can be estimated through~\eqref{eq_Q}, \eqref{eq_F}, and \eqref{eq_estTheta2}.
Next, the estimated angular frequencies $\hat{\omega}_{\tau_\text{L}}$, $\hat{\omega}_{\tau_\text{R}}$, $\hat{\omega}_{\text{L},1}$, $\hat{\omega}_{\text{R},1}$, $\hat{\omega}_{\text{L},2}$, and $\hat{\omega}_{\text{R},2}$ can be obtained from the phases of the eigenvalues of $\hat{\bm{\Theta}}_1$, $\hat{\bm{\Theta}}_2$, and $\hat{\bm{\Theta}}_3$,
and the corresponding channel parameters can be recovered based on~\eqref{eq_wL}--\eqref{eq_wtau}.

\subsubsection{Estimating $\vartheta_2$ and $\vartheta_3$}

Based on the estimated parameters, we construct a matrix $\mathbf{R}$ as 
\begin{equation}\notag
  \mathbf{R} = 
  \begin{bmatrix}
    \mathbf{a}^{(K)}(\hat{\omega}_{\tau_\text{L}})\otimes\mathbf{W}^\mathsf{H}\mathbf{a}_\text{B}(\hat{\bm{\theta}}_\text{L}) & \mathbf{a}^{(K)}(\hat{\omega}_{\tau_\text{R}})\otimes\mathbf{W}^\mathsf{H}\mathbf{a}_\text{B}(\hat{\bm{\theta}}_\text{R})
  \end{bmatrix}.
\end{equation}
Then, we have the following system of equations:
\begin{equation}
  \mathbf{R}
  [
    \alpha_\text{L}, 
    \beta_{\text{R},g}
  ]^\mathsf{T}  = \mathbf{h}_g^{(\text{b})},\ g=1,\dots,G.
\end{equation}
Thus, we have a set of \ac{ls} estimates $\{\hat{\alpha}_\text{L},\hat{\beta}_{\text{R},g}\},g=1,\dots,G$ given by
\begin{equation}\label{eq_coarseLS}
  [\hat{\alpha}_\text{L},\hat{\beta}_{\text{R},g}]^\mathsf{T} = (\mathbf{R}^\mathsf{H}\mathbf{R})^{-1}\mathbf{R}^\mathsf{H}\mathbf{h}_g^{(\text{b})}.
\end{equation}
Now, let
\begin{align}
  \hat{\mathbf{s}}_\text{R}&=[\hat{\beta}_{\text{R},1},\hat{\beta}_{\text{R},2},\dots,\hat{\beta}_{\text{R},G}]^\mathsf{T}
  = {\mathbf{s}}_\text{R} + \Delta{\mathbf{s}}_\text{R}\in\mathbb{C}^{G},
\end{align}  
where ${\mathbf{s}}_\text{R}=[{\beta}_{\text{R},1},{\beta}_{\text{R},2},\dots,{\beta}_{\text{R},G}]^\mathsf{T}\in\mathbb{C}^{G}$ and $\Delta{\mathbf{s}}_\text{R}$ is the estimation error of $\hat{\mathbf{s}}_\text{R}$.
By defining 
\begin{align}\label{eq_omegavartheta}
  \omega_{\vartheta_2}=({2\pi f_cd_\text{R}}/{c})\vartheta_2,\quad 
 \omega_{\vartheta_3}=({2\pi f_cd_\text{R}}/{c})\vartheta_3,
\end{align}
the noise-free vector $\mathbf{s}_\text{R}$ can be represented as a second-order tensor (matrix) $\bm{\mathcal{S}}^{(b)}\in\mathbb{C}^{\sqrt{G}\times\sqrt{G}}$ as 
\begin{align}\label{eq_tensorS}
  \bm{\mathcal{S}}^{(b)} &= \alpha_\text{R}(\mathbf{T}_3\otimes\mathbf{T}_4) \Big(\mathbf{a}^{(N_{\text{R},1})}(\omega_{\vartheta_2})\otimes\mathbf{a}^{(N_{\text{R},2})}(\omega_{\vartheta_3})\Big) \\ 
  &= \alpha_\text{R} \mathbf{T}_3 \mathbf{a}^{(N_{\text{R},1})}(\omega_{\vartheta_2}) \circ \mathbf{T}_4\mathbf{a}^{(N_{\text{R},2})}(\omega_{\vartheta_3}).
\end{align}
Therefore, the parameters $\vartheta_2$ and $\vartheta_3$ can be estimated from $\hat{\mathbf{s}}_\text{R}$ using the same routine given by~\eqref{eq_Q}, \eqref{eq_F}, and \eqref{eq_estTheta2}.

As a summary, the complete steps of the coarse channel estimation process are presented in Algorithm~\ref{algo1}. 

\begin{algorithm}[t]
  \renewcommand{\algorithmicrequire}{\textbf{Input:}}
  \renewcommand{\algorithmicensure}{\textbf{Output:}}
  \caption{Coarse Channel Estimation via Tensor-ESPRIT}
  \label{algo1}
  \begin{algorithmic}[1]
      \REQUIRE $\mathbf{y}_{g,k},g=1,\dots,G,k=1,\dots,K$.
      \ENSURE channel parameters estimate $\hat{\bm{\eta}}$.
      \STATE Obtain $\mathbf{H} = [{{\mathbf{h}}_{1}^{(\text{b})}},\dots,{{\mathbf{h}}_{G}^{(\text{b})}}]$ based on~\eqref{eq_geth} and~\eqref{eq_geth2}.
      \STATE Calculate $\mathbf{h}^{(\text{b})} = \sum_{g=1}^G {{\mathbf{h}}_{g}^{(\text{b})}}$ and form tensor $\bm{\mathcal{H}}^{(b)}$ in~\eqref{eq_bH}.
      \STATE Apply \ac{cp} decomposition to $\bm{\mathcal{H}}^{(b)}$ and obtain the factor matrices $\mathbf{U}^{(n)},n=1,\dots,3$.
      \STATE Estimate $\hat{\bm{\Theta}}_1$ through~\eqref{eq_estTheta} and obtain $\{\hat{\omega}_{\tau_\text{L}},\hat{\omega}_{\tau_\text{R}}\}$ from the phases of the eigenvalues of $\hat{\bm{\Theta}}_1$.
      \STATE Estimate $\hat{\bm{\Theta}}_2$ and $\hat{\bm{\Theta}}_3$ through~\eqref{eq_Q}--\eqref{eq_estTheta2}, and obtain $\{\hat{\omega}_{\text{L},1},\hat{\omega}_{\text{R},1},\hat{\omega}_{\text{L},2},\hat{\omega}_{\text{R},2}\}$ from the phases of the eigenvalues of $\hat{\bm{\Theta}}_2$ and $\hat{\bm{\Theta}}_3$.
      \STATE Recover $\{\hat{\theta}_\text{L}^{\text{az}},\hat{\theta}_\text{L}^{\text{el}},\hat{\theta}_\text{R}^{\text{az}},\hat{\theta}_\text{R}^{\text{el}},\hat{\tau}_\text{L},\hat{\tau}_\text{R}\}$ based on~\eqref{eq_wL}--\eqref{eq_wtau}.
      \STATE Obtain $\mathbf{s}_\text{R}$ by~\eqref{eq_coarseLS} and form tensor $\bm{\mathcal{S}}^{(b)}$ in~\eqref{eq_tensorS}.
      \STATE Obtain $\{\hat{\omega}_{\vartheta_2},\hat{\omega}_{\vartheta_3}\}$ by applying \ac{cp} decomposition and repeating Step 5 to $\bm{\mathcal{S}}^{(b)}$.
      \STATE Recover $\{\hat{\vartheta}_2,\hat{\vartheta}_3\}$ based on~\eqref{eq_omegavartheta}.
      \STATE Return $\hat{\bm{\eta}} = [\hat{\theta}_\text{L}^{\text{az}},\hat{\theta}_\text{L}^{\text{el}},\hat{\theta}_\text{R}^{\text{az}},\hat{\theta}_\text{R}^{\text{el}},\hat{\tau}_\text{L},\hat{\tau}_\text{R},\hat{\vartheta}_2,\hat{\vartheta}_3]^\mathsf{T}$.
  \end{algorithmic}
\end{algorithm}

\subsection{Channel Parameters Refinement via Least Squares}\label{sec_refined}

Channel parameter estimates $\tilde{\bm{\eta}}$ are refined based on the \ac{ls} criterion initialized using the coarse estimates from Subsection~\ref{sec_corase}.
By defining
\begin{align}
  \bm{\mu}_\text{L}^{g,k} &= \mathbf{W}^\mathsf{H} e^{-j2\pi(k-1)\Delta_f\tau_\text{L}}\mathbf{a}_\text{B}(\bm{\theta}_\text{L})x_{g,k},\\
  \bm{\mu}_\text{R}^{g,k} &= \mathbf{W}^\mathsf{H} \left[(\mathbf{a}_\text{R}(\bm{\phi}_\text{A})\odot\mathbf{a}_\text{R}(\bm{\phi}_\text{D}))^\mathsf{T}\bm{\gamma}_g\right]\\
  &\qquad\qquad\times e^{-j2\pi(k-1)\Delta_f\tau_\text{R}}\mathbf{a}_\text{B}(\bm{\theta}_\text{R})x_{g,k},
\end{align} 
we have the noise-free signals given by 
\begin{equation}\label{eq_mu}
  {\bm{\mu}}_{g,k} = \alpha_\text{L}\bm{\mu}_\text{L}^{g,k} + \alpha_\text{R}\bm{\mu}_\text{R}^{g,k}.
\end{equation}
We further define vectors $\{\mathbf{y},\bm{\mu}_\text{L}, \bm{\mu}_\text{R} \}\in\mathbb{C}^{GKN_1N_2}$ as the concatenation of $\mathbf{y}_{g,k}$, $\bm{\mu}_\text{L}^{g,k}$ and $\bm{\mu}_\text{R}^{g,k}$ over $g=1,\dots,G,k=1,\dots,K.$
Since the noise in~\eqref{eq_y_new} is colored with unknown statistics, we perform a sub-optimal estimator based on the \ac{ls} criterion as
\begin{equation}\label{eq_LS}
  \hat{\bm{\eta}}_{\text{LS}} = \arg\min_{\bm{\eta}}\ \|\mathbf{y}-\alpha_\text{L}\bm{\mu}_\text{L}(\bm{\eta}) - \alpha_\text{R}\bm{\mu}_\text{R}(\bm{\eta})\|_2^2.
\end{equation} 
The value of the complex channel gains $\alpha_\text{L}$ and $\alpha_\text{R}$ can be obtained as a function of $\mathbf{y}$, $\bm{\mu}_\text{L}$ and $\bm{\mu}_\text{R}$ by solving 
\begin{equation}
  \left\{
\begin{array}{l}
  {\partial \|\mathbf{y}-\alpha_\text{L}\bm{\mu}_\text{L}(\bm{\eta}) - \alpha_\text{R}\bm{\mu}_\text{R}(\bm{\eta})\|_2^2}/{\partial \mathfrak{R}(\alpha_\text{L})} = 0,\\
  {\partial \|\mathbf{y}-\alpha_\text{L}\bm{\mu}_\text{L}(\bm{\eta}) - \alpha_\text{R}\bm{\mu}_\text{R}(\bm{\eta})\|_2^2}/{\partial \mathfrak{R}(\alpha_\text{R})} = 0,
\end{array}\right.
\end{equation}
which give
\begin{align}
  \hat{\alpha}_\text{L} &= \frac{\bm{\mu}_\text{L}^\mathsf{H} \mathbf{y}\|\bm{\mu}_\text{R}\|_2^2 - \bm{\mu}_\text{R}^\mathsf{H} \mathbf{y}\bm{\mu}_\text{L}^\mathsf{H} \bm{\mu}_\text{R}}{\|\bm{\mu}_\text{L}\|_2^2\|\bm{\mu}_\text{R}\|_2^2 - |\bm{\mu}_\text{L}^\mathsf{H} \bm{\mu}_\text{R}|^2},\\
  \hat{\alpha}_\text{R} &= \frac{\bm{\mu}_\text{R}^\mathsf{H} \mathbf{y}\|\bm{\mu}_\text{L}\|_2^2 - \bm{\mu}_\text{L}^\mathsf{H} \mathbf{y}\bm{\mu}_\text{R}^\mathsf{H} \bm{\mu}_\text{L}}{\|\bm{\mu}_\text{L}\|_2^2\|\bm{\mu}_\text{R}\|_2^2 - |\bm{\mu}_\text{L}^\mathsf{H} \bm{\mu}_\text{R}|^2}.
\end{align}
Then, \eqref{eq_LS} can be solved by using, e.g., gradient descent method given the initialization $\hat{\bm{\eta}}$.
We denote the refined localization-related channel parameters as $\tilde{\bm{\eta}}$.

\subsection{Conversion to the Localization Domain}

Based on the refined channel parameters estimates $\tilde{\bm{\eta}}$, the localization parameters $\bm{\xi}$ can be recovered by carrying out a 2D search over $o_3$ and $\Delta$; the rest of the localization parameters can be determined from each search point $[o_3,\Delta]^\mathsf{T}$ and a cost metric can be defined to compare the fitness of different search points, as will be explained imminently. 

Given channel parameters $[\theta_\text{L}^{\text{az}},\theta_\text{L}^{\text{el}},\theta_\text{R}^{\text{az}},\theta_\text{R}^{\text{el}},\tau_\text{L},\tau_\text{R},\vartheta_2,\vartheta_3]^\mathsf{T}$ and any $\check{o}_3$ and $\check{\Delta}$, we can first determine the propagation distance of the \ac{los} \ac{ue}-\ac{bs} path and \ac{ris} reflection path as 
\begin{align}\label{eq_dldr}
  \check{d}_\text{L} = c(\tau_\text{L} - \check{\Delta}),\qquad
  \check{d}_\text{R} = c(\tau_\text{R} - \check{\Delta}).
\end{align}
which further determine the \ac{ue} position as 
\begin{equation}\label{eq_pu}
  \check{\mathbf{p}}_\text{U} = \mathbf{p}_\text{B} + \check{d}_\text{L}\mathbf{R}_\text{B}\mathbf{t}(\theta_\text{L}^{\text{az}},\theta_\text{L}^{\text{el}}).
\end{equation}
The \ac{ris} position can be obtained as the intersection of the ellipsoid $\|\check{\mathbf{p}}_\text{R}-\mathbf{p}_\text{B}\|_2+\|\check{\mathbf{p}}_\text{R}-\check{\mathbf{p}}_\text{U}\|_2=\check{d}_\text{R}$
and the line $\check{\mathbf{p}}_\text{R}=\mathbf{p}_\text{B}+x\mathbf{R}_\text{B}\mathbf{t}(\theta_\text{R}^{\text{az}},\theta_\text{R}^{\text{el}})$, with $x>0$ being the distance between \ac{bs} and \ac{ris}. The intersection is determined by solving for $x$ to obtain
\begin{equation}\label{eq_pr}
  x = \frac{\check{d}_\text{R}^2-\|\mathbf{p}_\text{B}-\check{\mathbf{p}}_\text{U}\|_2^2}{2\left(\check{d}_\text{R}+(\mathbf{R}_\text{B}\mathbf{t}(\theta_\text{R}^{\text{az}},\theta_\text{R}^{\text{el}}))^\mathsf{T} (\mathbf{p}_\text{B}-\check{\mathbf{p}}_\text{U})\right)}.
\end{equation}
Then, we can predict the intermediate measurements $\{\check{\vartheta}_2,\check{\vartheta}_3\}$ according to~\eqref{eq_vartheta2} and~\eqref{eq_vartheta3} based on $\check{o}_3$, $\check{\mathbf{p}}_\text{U}$, $\check{\mathbf{p}}_\text{R}$ and ${\mathbf{p}}_\text{B}$.
Thus, we can compute the cost metric as
\begin{equation}\label{eq_cost}
  f(\check{o}_3,\check{\Delta}) = \big\|[\check{\vartheta}_2,\check{\vartheta}_3]^\mathsf{T}-[\hat{\vartheta}_2,\hat{\vartheta}_3]^\mathsf{T} \big\|_2^2,
\end{equation}
which allows us to perform a \ac{2d} search over all the $\{o_3,\Delta\}$ candidates in a pre-defined search space. 
The optimal values that minimize $f(\check{o}_3,\check{\Delta})$ in~\eqref{eq_cost} are then returned as the estimated $\hat{o_3}$ and $\hat{\Delta}$, and the corresponding $\hat{\mathbf{p}}_\text{U}$ and $\hat{\mathbf{p}}_\text{R}$ can be determined accordingly.

Let $\{\mathcal{P}_{o_3},\mathcal{P}_\Delta\}$ denotes the search spaces and $\{d_{o_3},d_\Delta\}$ denotes the search resolutions.
We can further perform multiple rounds of search grid refinement.
As an example, for the $i$th round refined search over the search space $\{\mathcal{P}_{o_3}^i,\mathcal{P}_\Delta^i\}$ and resolution $\{d_{o_3}^i,d_\Delta^i\}$ that returns estimates $\{\hat{o}_3^i,\hat{\Delta}^i\}$, we can shrink the resolution and the search space in round $i+1$ such that
\begin{align}
  d_{o_3}^{i+1} &= \kappa d_{o_3}^{i},\qquad d_{\Delta}^{i+1} = \kappa d_{\Delta}^{i},\label{eq_shrinkreso}\\
  \mathcal{P}_{o_3}^{i+1} &= \{\dots,\hat{o}_3^i-d_{o_3}^{i+1},\hat{o}_3^i,\hat{o}_3^i+d_{o_3}^{i+1},\dots\},\label{eq_shrinkPo}\\
  \mathcal{P}_{\Delta}^{i+1} &= \{\dots,\hat{\Delta}^i-d_{\Delta}^{i+1},\hat{\Delta}^i,\hat{\Delta}^i+d_{\Delta}^{i+1},\dots\},\label{eq_shrinkPd}
\end{align}
where $\kappa\in(0,1)$ and the cardinalities of $\mathcal{P}_{o_3}^i$ and $\mathcal{P}_\Delta^i$ are fixed as $\mathrm{card}(\mathcal{P}_{o_3}^i)=C_{o_3}$ and $\mathrm{card}(\mathcal{P}_{\Delta}^i)=C_{\Delta}$ for all $i$ values.
The pseudo-code of the proposed search method is summarized in Algorithm~\ref{algo3}.

\begin{algorithm}[t]
  \renewcommand{\algorithmicrequire}{\textbf{Input:}}
  \renewcommand{\algorithmicensure}{\textbf{Output:}}
  \caption{2D Search-Based Localization Algorithm}
  \label{algo3}
  \begin{algorithmic}[1]
      \REQUIRE refined channel parameters $\tilde{\bm{\eta}}$, number of refinement~$Q$.
      \ENSURE localization parameters estimates $\hat{\bm{\xi}}$.
      \STATE Initialize search space and resolution $\{\mathcal{P}_{o_3}^0,\mathcal{P}_\Delta^0, d_{o_3}^0,d_\Delta^0\}$.
      \FOR {$i=0,\dots,Q$}
        \FOR {every candidate $\check{o}_3\in\mathcal{P}_{o_3}^i$}
          \FOR {every candidate $\check{\Delta}\in\mathcal{P}_{\Delta}^i$}
            \STATE Compute $\check{d}_\text{L}$ and $\check{d}_\text{R}$ through~\eqref{eq_dldr}.
            \STATE Estimate \ac{ue} position $\check{\mathbf{p}}_\text{U}$ through~\eqref{eq_pu} and estimate \ac{ris} position $\check{\mathbf{p}}_\text{R}$ through~\eqref{eq_pr}.
            \STATE Predict the intermediate measurements $\{\check{\vartheta}_2,\check{\vartheta}_3\}$ based on $\check{o}_3$, $\check{\mathbf{p}}_\text{U}$, $\check{\mathbf{p}}_\text{R}$ and ${\mathbf{p}}_\text{B}$.
            \STATE Compute the cost metric $f(\check{o}_3,\check{\Delta})$ based on~\eqref{eq_cost}.
          \ENDFOR
        \ENDFOR
        \STATE Select the candidate pair $\{\hat{o}_3^i,\hat{\Delta}^i\}$ that minimize $f({o}_3,{\Delta})$ and determine $\hat{\mathbf{p}}_\text{U}^i$ and $\hat{\mathbf{p}}_\text{R}^i$ based on $\{\hat{o}_3^i,\hat{\Delta}^i\}$.
        \STATE Shrink $\{\mathcal{P}_{o_3}^{i+1},\mathcal{P}_\Delta^{i+1}, d_{o_3}^{i+1},d_\Delta^{i+1}\}$ by~\eqref{eq_shrinkreso}--\eqref{eq_shrinkPd}.
      \ENDFOR
      \STATE Return $\hat{\bm{\xi}}=[(\hat{\mathbf{p}}_\text{U}^Q)^\mathsf{T},(\hat{\mathbf{p}}_\text{R}^Q)^\mathsf{T} ,\hat{o}_3^Q,\hat{\Delta}^Q]^\mathsf{T}$.
  \end{algorithmic}
\end{algorithm}

\subsection{Complexity Analysis}\label{sec_complexity}

This subsection evaluates the computational complexity of the proposed algorithms.
Among the proposed method, Algorithm~\ref{algo1} involves the channel matrix recovery with a complexity $\mathcal{O}(KGN_1N_2)$, the \ac{cp} decomposition of $\bm{\mathcal{H}}^{(b)}$ and the corresponding matrix multiplications with a complexity $\mathcal{O}(KN_1N_2)+\mathcal{O}(K^2)+\mathcal{O}(N_1^2)+\mathcal{O}(N_2^2)$, and the \ac{cp} decomposition of $\bm{\mathcal{S}}^{(b)}$ and the corresponding matrix multiplications with a complexity $\mathcal{O}(G)$. 
The \ac{ls} refinement of the channel parameters performs an iterative procedure, which gives a complexity $\mathcal{O}(TKGN_{1}N_{2})$ where $T$ denotes the total number of iterations.
Finally, a complexity of $\mathcal{O}(QC_{o_3}C_{\Delta})$ is introduced by Algorithm~\ref{algo3}. In summary, the overall complexity of the proposed solution for the \ac{jrcup} problem is given by
\begin{multline}\label{eq_complexity}
    \mathcal{O}_\text{total} = \mathcal{O}(K^2)+\mathcal{O}(N_1^2)+\mathcal{O}(N_2^2) \\ + \mathcal{O}(TKGN_{1}N_{2}) + \mathcal{O}(QC_{o_3}C_{\Delta}).
\end{multline}


\section{Localization Error Bounds Derivation}
\label{sec_crlb}

\subsection{\ac{crlb} for Channel Parameters Estimation}\label{sec_CRLB1}
Considering the fact that the received noise~\eqref{eq_noisenew} is zero-mean noncircular complex Gaussian random variables, the \ac{fim} of all the unknown channel parameters $\bm{\eta}_{\text{ch}}$ is given by the following Proposition~\ref{pro_2}.
\begin{proposition}\label{pro_2}
  Based on the channel model~\eqref{eq_noisenew}, \eqref{eq_y_new} and~\eqref{eq_mu}, the \ac{fim} of channel parameters $\bm{\eta}_{\text{ch}}$ can be computed as
\begin{align}
  \mathbf{J}({\boldsymbol\eta}_\text{ch}) & 
  = \sum^{G}_{g=1} \sum^K_{k=1}
    \mathbf{D}_{g,k}^\mathsf{T} \left[\mathbf{C}_0 + \mathbf{C}_r^{g,k}\right]^{-1} \mathbf{D}_{g,k},
  \label{eq:FIM_measurement}
\end{align}
where 
\begin{align}
  \mathbf{D}_{g,k} &= \bigg[
    \mathfrak{R}\left(\frac{\partial {\bm{\mu}}_{g,k} }{\partial {\boldsymbol\eta}_\text{ch}}\right)^\mathsf{T},\ 
    \mathfrak{I}\left(\frac{\partial {\bm{\mu}}_{g,k} }{\partial {\boldsymbol\eta}_\text{ch}}\right)^\mathsf{T}
    \bigg]^\mathsf{T},\\
  \mathbf{C}_0 &= \frac{\sigma_0^2}{2}\begin{bmatrix}
    \mathfrak{R}\left(\mathbf{A}_0\mathbf{A}_0^\mathsf{H} \right) & \mathfrak{I}\left(\mathbf{A}_0\mathbf{A}_0^\mathsf{H} \right)^\mathsf{T} \vspace{0.3em} \\
    \mathfrak{I}\left(\mathbf{A}_0\mathbf{A}_0^\mathsf{H} \right) & \mathfrak{R}\left(\mathbf{A}_0\mathbf{A}_0^\mathsf{H} \right)
  \end{bmatrix},\\
  \mathbf{C}_r^{g,k} &= \frac{\sigma_r^2}{2}\begin{bmatrix}
    \mathfrak{R}\left(\mathbf{A}_r^{g,k}(\mathbf{A}_r^{g,k})^\mathsf{H} \right) & \mathfrak{I}\left(\mathbf{A}_r^{g,k}(\mathbf{A}_r^{g,k})^\mathsf{H} \right)^\mathsf{T} \vspace{0.3em} \vspace{0.2em} \\
    \mathfrak{I}\left(\mathbf{A}_r^{g,k}(\mathbf{A}_r^{g,k})^\mathsf{H} \right) & \mathfrak{R}\left(\mathbf{A}_r^{g,k}(\mathbf{A}_r^{g,k})^\mathsf{H} \right)
  \end{bmatrix},
\end{align}
with $\mathbf{A}_0 = \mathbf{W}^\mathsf{H}$ and $\mathbf{A}_r^{g,k}=\mathbf{W}^\mathsf{H}\mathbf{H}_{\text{R},2}^k\bm{\Gamma}_g$.
\end{proposition}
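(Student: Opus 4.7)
The plan is to reduce the \ac{fim} computation for the complex-valued Gaussian observation in~\eqref{eq_y_new} to the classical Slepian--Bangs formula for a real-valued Gaussian vector with parameter-dependent mean and parameter-independent covariance, and then identify the resulting block structure with the expression in the proposition. I would carry this out in three steps.

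First, I would characterize the second-order statistics of the aggregated noise $\mathbf{n}_{g,k}=\mathbf{A}_0\mathbf{n}_0+\mathbf{A}_r^{g,k}\mathbf{n}_r$ defined by~\eqref{eq_noisenew}, with $\mathbf{A}_0=\mathbf{W}^\mathsf{H}$ and $\mathbf{A}_r^{g,k}=\mathbf{W}^\mathsf{H}\mathbf{H}_{\text{R},2}^k\bm{\Gamma}_g$. Because $\mathbf{n}_0$ and $\mathbf{n}_r$ are independent zero-mean circularly symmetric complex Gaussian, $\mathbf{n}_{g,k}$ is itself circularly symmetric complex Gaussian with zero pseudo-covariance and Hermitian covariance $\sigma_0^2\mathbf{A}_0\mathbf{A}_0^\mathsf{H}+\sigma_r^2\mathbf{A}_r^{g,k}(\mathbf{A}_r^{g,k})^\mathsf{H}$. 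Independence across the $GK$ subcarrier/symbol indices follows from drawing fresh noise realizations at each $(g,k)$.

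Second, I would pass to an equivalent real-valued representation by stacking $\bigl[\mathfrak{R}(\mathbf{y}_{g,k})^\mathsf{T},\mathfrak{I}(\mathbf{y}_{g,k})^\mathsf{T}\bigr]^\mathsf{T}$. The mean of this real vector is $\bigl[\mathfrak{R}(\bm{\mu}_{g,k})^\mathsf{T},\mathfrak{I}(\bm{\mu}_{g,k})^\mathsf{T}\bigr]^\mathsf{T}$ with Jacobian $\mathbf{D}_{g,k}$ with respect to the real parameter vector $\bm{\eta}_{\text{ch}}$. For any Hermitian $\mathbf{C}$ that is the complex covariance of a circular Gaussian, the induced real covariance of the stacked real/imaginary parts is $\tfrac12\bigl[\begin{smallmatrix}\mathfrak{R}(\mathbf{C}) & -\mathfrak{I}(\mathbf{C})\\ \mathfrak{I}(\mathbf{C}) & \mathfrak{R}(\mathbf{C})\end{smallmatrix}\bigr]$. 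Using that $\mathbf{C}$ Hermitian forces $\mathfrak{I}(\mathbf{C})$ antisymmetric so that $-\mathfrak{I}(\mathbf{C})=\mathfrak{I}(\mathbf{C})^\mathsf{T}$, and specializing $\mathbf{C}$ to each of the two additive covariance components derived in Step~1, I would recover precisely the block matrices $\mathbf{C}_0$ and $\mathbf{C}_r^{g,k}$ in the statement, and hence the total real covariance $\mathbf{C}_0+\mathbf{C}_r^{g,k}$.

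Finally, I would assemble the per-observation FIM by applying the Slepian--Bangs formula with parameter-independent covariance to each $(g,k)$, giving $\mathbf{D}_{g,k}^\mathsf{T}(\mathbf{C}_0+\mathbf{C}_r^{g,k})^{-1}\mathbf{D}_{g,k}$, and summing over $g$ and $k$ by independence. The main subtlety I expect to have to acknowledge is that $\mathbf{C}_r^{g,k}$ in fact depends on the unknowns $\bm{\eta}_{\text{ch}}$ through $\mathbf{H}_{\text{R},2}^k$, so a strictly exact application of Slepian--Bangs would add a covariance-gradient trace term $\tfrac12\,\text{tr}\bigl((\mathbf{C}_0+\mathbf{C}_r^{g,k})^{-1}\partial_i\mathbf{C}_r^{g,k}(\mathbf{C}_0+\mathbf{C}_r^{g,k})^{-1}\partial_j\mathbf{C}_r^{g,k}\bigr)$. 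The proposition discards this contribution, which amounts to treating the covariance structure as a nuisance that cannot be exploited for estimation; this matches the later observation in the paper that the colored, unknown noise prevents the proposed estimator from achieving the theoretical bound, and therefore justifies using the mean-only FIM as the operative lower bound.
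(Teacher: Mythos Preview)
Your proposal is correct and follows essentially the same route as the paper's proof in Appendix~A: both convert the complex observation~\eqref{eq_y_new} into a stacked real/imaginary Gaussian vector, compute the induced real covariance block structure (using that $\mathfrak{I}(\mathbf{C})$ is antisymmetric for Hermitian $\mathbf{C}$), apply the standard Gaussian FIM formula, and sum over independent $(g,k)$. Your explicit acknowledgment that the covariance-gradient trace term is deliberately dropped also exactly matches the paper's Remark~1.
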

\begin{proof}
  See Appendix~\ref{appendix_A}
\end{proof}
\begin{remark}
Note that in~\eqref{eq:FIM_measurement}, the covariance matrix $\mathbf{C}_r^{g,k}$ is a function of ${\boldsymbol\eta}_\text{ch}$, which also contributes to the \ac{fim} of ${\boldsymbol\eta}_\text{ch}$ in principle~\cite[B.3.3]{Stoica2005Spectral}. 
Nonetheless, since the noise statistics information is not used in the channel estimation processes in Subsection~\ref{sec_corase} and~\ref{sec_refined}, we ignore this relationship in this work.
\end{remark}

Based on~\eqref{eq:FIM_measurement}, we can compute the \ac{fim} of localization-related parameters $\bm{\eta}$ using Schur's complement: we partition $\mathbf{J}(\bm{\eta}_\text{ch})=[\mathbf{X},\mathbf{Y};\mathbf{Y}^\mathsf{T},\mathbf{Z}]$,
where $\mathbf{X}\in\mathbb{R}^{8\times 8}$ so that $\mathbf{J}(\bm{\eta})=\mathbf{X}-\mathbf{Y}\mathbf{Z}^{-1}\mathbf{Y}^\mathsf{T}$.
Then the estimation error bounds for $\bm{\theta}_\text{L}$, $\bm{\theta}_\text{R}$, $\tau_\text{L}$, $\tau_\text{R}$ and $\bm{\vartheta}$ can be derived as
\begin{align}
  \mathrm{EB}(\bm{\theta}_\text{L}) & = \sqrt{\text{tr}([\mathbf{J}({\boldsymbol\eta})^{-1}]_{1:2, 1:2})},\\
  \mathrm{EB}(\bm{\theta}_\text{R}) &= \sqrt{\text{tr}([\mathbf{J}({\boldsymbol\eta})^{-1}]_{3:4, 3:4})},\\
  \mathrm{EB}(\tau_\text{L}) & = \sqrt{[\mathbf{J}({\boldsymbol\eta})^{-1}]_{5, 5}},\\
  \mathrm{EB}(\tau_\text{R}) &= \sqrt{[\mathbf{J}({\boldsymbol\eta})^{-1}]_{6, 6}},\\
  \mathrm{EB}(\bm{\vartheta}) &= \sqrt{\text{tr}([\mathbf{J}({\boldsymbol\eta})^{-1}]_{7:8, 7:8})},
\end{align}
which lower bound the estimation \acp{rmse} for the corresponding parameters.

\subsection{\ac{crlb} for Localization Parameters Estimation}

Based on the calculated $\mathbf{J}({\boldsymbol\eta})$ and the geometric model in~\eqref{eq_geo1}--\eqref{eq_tau} and \eqref{eq_vartheta2}--\eqref{eq_vartheta3}, 
we can further derive the \ac{fim} of the localization parameters $\bm{\xi}$ using the chain rule of the \ac{fim} transformation as~\cite{kay1993fundamentals}
\begin{equation}\label{eq_Ixi}
  \mathbf{J}({\boldsymbol\xi}) = \mathbf{T}^\mathsf{T} \mathbf{J}({\boldsymbol\eta})\mathbf{T},
\end{equation}
where $\mathbf{T}=\partial \bm{\eta}/\partial \bm{\xi} \in \mathbb{R}^{8\times 8}$ is the Jacobian matrix.
Then the lower bounds for the estimation \ac{rmse} of $\mathbf{p}_\text{U}$, $\mathbf{p}_\text{R}$, $o_3$ are
\begin{align}
  \mathrm{EB}(\mathbf{p}_\text{U}) &= \sqrt{\text{tr}([\mathbf{J}({\boldsymbol\xi})^{-1}]_{1:3, 1:3})},\\ 
  \mathrm{EB}(\mathbf{p}_\text{R})  &= \sqrt{\text{tr}([\mathbf{J}({\boldsymbol\xi})^{-1}]_{4:6, 4:6})},\\ 
  \mathrm{EB}(o_3) &= \sqrt{[\mathbf{J}({\boldsymbol\xi})^{-1}]_{7, 7}}.
\end{align}


\section{Numerical Results}
\label{sec_sims}

\subsection{Evaluation Setup}

\begin{table}[t]
  \renewcommand{\arraystretch}{1.2}
  \begin{center}
  \caption{Default Simulation Parameters}\vspace{-0.5em}
  \label{tab1}
  \begin{tabular}{  c !{\vrule width1pt} c }
    \Xhline{1pt}
    \textbf{Parameter} & \textbf{Value}\\
    \Xhline{1pt}
    Propagation Speed $c$ & $\unit[2.9979\times 10^8]{m/s}$ \\
    \hline 
    Carrier Frequency $f_c$ & $\unit[28]{GHz}$ \\
    \hline
    Bandwidth $B$ & $\unit[100]{MHz}$ \\
    \hline 
    \# Subcarriers $K$ & $32$ \\
    \hline
    \# Transmissions $G$ & $9$ \\
    \hline
    Clock Offset $\rho$ & $\unit[100]{ns}$\\
    \hline
    Transmission Power $P_\text{T}$ & $\unit[10]{dBm}$ \\
    \hline
    Active RIS Power $P_\text{R}$ & $\unit[7]{dBm}$ \\
    \hline
    Noise PSD of Receiver \& RIS & $\unit[-174]{dBm/Hz}$ \\
    \hline
    Noise Figure of Receiver \& \ac{ris} & $\unit[10]{dB}$ \\
    \Xhline{1pt}
    Array Size of \ac{bs} / \ac{rfc} / \ac{ris}  & $10\times 10$ / $5\times 5$ / $15\times 15$\\
    \hline
    Position \& Orientation of \ac{bs}  & \makecell[c]{$\mathbf{p}_{\text{B}}=[0,5,3]^\mathsf{T}$, $\mathbf{o}_\text{B}=[0,0,-\pi/2]^\mathsf{T}$ } \\
    \hline
    Position \& Orientation of \ac{ris} & \makecell[c]{$\mathbf{p}_\text{R}=[-5,0,3]^\mathsf{T}$, $\mathbf{o}_\text{R}=[0,0,0]^\mathsf{T}$} \\
    \hline
    Position of \ac{ue} & $\mathbf{p}_\text{B}=[3,2,1]^\mathsf{T} $\\
    \Xhline{1pt}
    \end{tabular}
\end{center}
\end{table}

We consider an indoor localization scenario within a $\unit[10]{m}\times\unit[10]{m}\times\unit[3]{m}$ space.
We use random signal symbols $x_{g,k}$ with the power constraint $|x_{g,k}|=\sqrt{P_\text{T}}$, and random precoder $\mathbf{W}$ and \ac{ris} profiles $\bm{\Upsilon}$ satisfying the structure constraints~\eqref{eq_Wconstraint} and~\eqref{eq_Gammaconstraint}.
Based on~\eqref{eq_PR1}, the active \ac{ris} amplification factor $p$ is calculated as
\begin{equation}\label{eq_PR_p}
  p = \sqrt{\frac{P_\text{R}}{
    N_{\text{R},1} N_{\text{R},2}(P_\text{T}|\alpha_{\text{R},1}|^2 + \sigma_r^2 ) }+1}.
\end{equation}
The channel gains of the \ac{los} \ac{ue}-\ac{bs}, \ac{ue}-\ac{ris} and \ac{ris}-\ac{bs} paths are generated by 
$
  \alpha_\text{L} = \frac{\lambda_c}{4\pi\|\mathbf{p}_\text{U}-\mathbf{p}_\text{B}\|_2}e^{j\psi_\text{L}},\ 
  \alpha_{\text{R},1} = \frac{\lambda_c}{4\pi\|\mathbf{p}_\text{U}-\mathbf{R}_\text{R}\|_2}e^{j\psi_{\text{R},1}},\ 
  \alpha_{\text{R},2} = \frac{\lambda_c}{4\pi\|\mathbf{p}_\text{R}-\mathbf{R}_\text{B}\|_2}e^{j\psi_{\text{R},2}},
$
where $\lambda_c=c/f_c$ and $\psi_\text{L}$, $\psi_{\text{R},1}$ and $\psi_{\text{R},2}$ are independently generated from a uniform distribution $\mathcal{U}(0,2\pi)$.
When the multipath effect is introduced, as an example, the channel gains of the \ac{nlos} paths between the \ac{ue} and \ac{bs} in~\eqref{eq_mulpa1} are set as $\alpha_\text{L}^i = \frac{\sqrt{4\pi c_{\text{L},i}}\lambda_c}{16\pi^2d_{\text{U},i}d_{\text{B},i}}e^{j\psi_{\text{L},i}},\ i=1,\dots,I_\text{L}$, where $c_{\text{L},i}$ represents the radar cross section (RCS) coefficient and $\psi_{\text{L},i}$ is the random phase. Here, $d_{\text{U},i}$ and $d_{\text{B},i}$ are the distances between the \ac{ue} and the $i$th \ac{sp} and the distance between the \ac{bs} and the $i$th \ac{sp}. The channel gains $\alpha_{\text{R},1}^i$ and $\alpha_{\text{R},2}^i$ in~\eqref{eq_mulpa2} and~\eqref{eq_mulpa3} are defined in a similar manner.
In addition, we define the received \ac{snr}~as 
\begin{equation}
  \mathrm{SNR} \triangleq \frac{\sum_{g=1}^G\sum_{k=1}^K\|{\bm{\mu}}_{g,k}\|_2^2}{\sum_{g=1}^G\sum_{k=1}^K\text{tr}(\mathbf{C}_0 + \mathbf{C}_r^{g,k})}.
\end{equation}

In this paper, the three Euler angles, i.e. $[\mathbf{o}]_1$, $[\mathbf{o}]_2$ and $[\mathbf{o}]_3$, represent the rotations around $X$-axis, $Y$-axis and $Z$-axis, respectively.
The default orientation $\mathbf{o}=[0,0,0]^\mathsf{T}$ is set to face the positive $X$-axis.
The element spacings of the \ac{bs} and \ac{ris} are set as $\unit[0.5]{\lambda_c}$ and $\unit[0.2]{\lambda_c}$, respectively.
Other default simulation parameters are listed in Table~\ref{tab1}.
Throughout the simulation examples, all the involved \acp{rmse} are computed over 500 Monte Carlo trials. {The channel delays and the clock bias are presented in units of meters by multiplying them by the constant propagation speed $c$ for better intuition.}

\subsection{Performance Evaluation of the Proposed Algorithms}

\begin{figure*}[t]
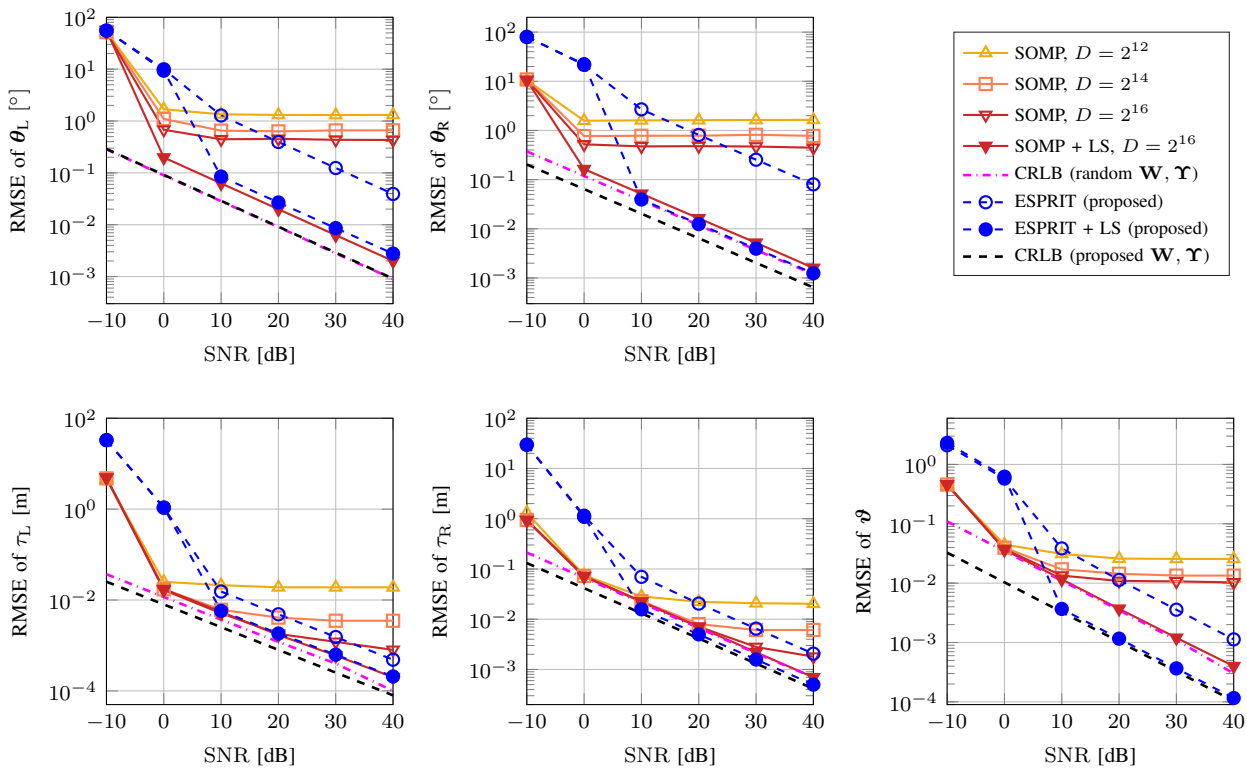

  \centering
  \include{figures/2.tex}
  \vspace{-2em}
  \caption{ 
    The evaluation of \ac{rmse} of $\bm{\theta}_\text{L}$, $\bm{\theta}_\text{R}$, $\tau_\text{L}$, $\tau_\text{R}$ and $\bm{\vartheta}$ versus received \ac{snr} for the existing SOMP algorithm, the proposed tensor-ESPRIT coarse estimation, and the proposed \ac{ls}-based refinement. 
    }
  \label{fig_2}
\end{figure*}

\begin{figure*}[t]
  \centering
%
%
\definecolor{mycolor1}{rgb}{1.00000,0.00000,1.00000}%
\definecolor{mycolor2}{rgb}{0.92941,0.69412,0.12549}%
\definecolor{ForestGreen}{rgb}{0.1333    0.5451    0.1333}
\definecolor{DarkGoldenrod}{rgb}{0.9333    0.6784    0.0549}
\definecolor{BlueViolet}{rgb}{ 0.6275    0.1255    0.9412}
\definecolor{Firebrick}{rgb}{0.8039    0.1490    0.1490}
\begin{tikzpicture}

\begin{axis}[%
  width=1.5in,
  height=1.5in,
  at={(0in,0in)},
scale only axis,
xmin=-10,
xmax=40,
xtick={-10,0,10,20,30,40},
xlabel style={font=\color{white!15!black},font=\footnotesize},
xlabel={$\mathrm{SNR}$ [dB]},
xticklabel style = {font=\color{white!15!black},font=\footnotesize},
ymode=log,
ymin=0.0004,
ymax=5000,
yminorticks=true,
ylabel style={font=\color{white!15!black},font=\footnotesize},
ylabel={RMSE of {$\mathbf{p}_\text{U}$} [m]},
yticklabel style = {font=\color{white!15!black},font=\footnotesize},
axis background/.style={fill=white},
xmajorgrids,
ymajorgrids,
legend style={at={(1,1)}, anchor=north east, legend cell align=left, align=left, draw=white!15!black,font=\scriptsize}
]
\addplot [color=blue, mark=+, line width=0.8pt, mark options={solid, blue}, mark size=2.5pt]
  table[row sep=crcr]{%
  -10	32.24850298857\\
  0	1.57765038527159\\
  10	0.138336816258025\\
  20	0.0652511553239066\\
  30	0.0595392279456372\\
  40	0.0589144589437207\\
};
\addlegendentry{Initial search}

\addplot [color=ForestGreen, mark=+, line width=0.8pt, mark options={solid, ForestGreen}, mark size=2.5pt]
  table[row sep=crcr]{%
  -10	32.2268998427402\\
  0	1.58840633117479\\
  10	0.120897288722865\\
  20	0.0401015573477628\\
  30	0.0166880902800326\\
  40	0.0121630164166215\\
};
\addlegendentry{1st refinement}

\addplot [color=mycolor2, mark=+, line width=0.8pt, mark options={solid, mycolor2}, mark size=2.5pt]
  table[row sep=crcr]{%
  -10	32.2235378279457\\
  0	1.59192422437218\\
  10	0.121411671620684\\
  20	0.0380105508669202\\
  30	0.0123533291207608\\
  40	0.00481657547386804\\
};
\addlegendentry{2nd refinement}

\addplot [color=mycolor1, mark=x, line width=0.8pt, mark options={solid, mycolor1}, mark size=2.5pt]
  table[row sep=crcr]{%
  -10	32.2228364818255\\
  0	1.59252921285188\\
  10	0.12149997688846\\
  20	0.0383748334297091\\
  30	0.0121845759424578\\
  40	0.00427116865465831\\
};
\addlegendentry{3rd refinement}

\addplot [color=black, line width=1pt, dashed]
  table[row sep=crcr]{%
  -10	0.933018577575658\\
  0	0.295046380438959\\
  10	0.0933018577576079\\
  20	0.0295046380439162\\
  30	0.00933018577575469\\
  40	0.0029504638043913\\
};
\addlegendentry{CRLB}

\end{axis}

\begin{axis}[%
  width=1.5in,
  height=1.5in,
at={(2.2in,0in)},
scale only axis,
xmin=-10,
xmax=40,
xtick={-10,0,10,20,30,40},
xlabel style={font=\color{white!15!black},font=\footnotesize},
xlabel={$\mathrm{SNR}$ [dB]},
xticklabel style = {font=\color{white!15!black},font=\footnotesize},
ymode=log,
ymin=0.0004,
ymax=5000,
yminorticks=true,
ylabel style={font=\color{white!15!black},font=\footnotesize},
ylabel={RMSE of {$\mathbf{p}_\text{R}$} [m]},
yticklabel style = {font=\color{white!15!black},font=\footnotesize},
axis background/.style={fill=white},
xmajorgrids,
ymajorgrids,
yminorgrids
]
\addplot [color=blue, mark=+, line width=0.8pt, mark options={solid, blue}, mark size=2.5pt, forget plot]
  table[row sep=crcr]{%
  -10	1499.93257372716\\
  0	25.1839172138169\\
  10	0.0368606341115181\\
  20	0.0166418182616513\\
  30	0.0146530797836833\\
  40	0.0143874942495323\\
};
\addplot [color=ForestGreen, mark=+, line width=0.8pt, mark options={solid, ForestGreen}, mark size=2.5pt, forget plot]
  table[row sep=crcr]{%
  -10	131.836420479885\\
  0	13.0832280130279\\
  10	0.032636897188646\\
  20	0.0106456558157767\\
  30	0.00423822968341568\\
  40	0.00300613691821486\\
};
\addplot [color=mycolor2, mark=+, line width=0.8pt, mark options={solid, mycolor2}, mark size=2.5pt, forget plot]
  table[row sep=crcr]{%
  -10	241.806065423954\\
  0	11.8612697936297\\
  10	0.0327605918613553\\
  20	0.0102601156313666\\
  30	0.00332126579742434\\
  40	0.00140860250011618\\
};
\addplot [color=mycolor1, mark=x, line width=0.8pt, mark options={solid, mycolor1}, mark size=2.5pt, forget plot]
  table[row sep=crcr]{%
  -10	729.031161986878\\
  0	11.6405265529622\\
  10	0.0327932208970493\\
  20	0.0103573246964043\\
  30	0.00329582505411602\\
  40	0.00114084204197421\\
};
\addplot [color=black, line width=1pt, dashed, forget plot]
  table[row sep=crcr]{%
  -10	0.245656280748866\\
  0	0.0776833368692197\\
  10	0.0245656280748956\\
  20	0.00776833368692687\\
  30	0.00245656280748813\\
  40	0.000776833368692581\\
};
\end{axis}

\begin{axis}[%
  width=1.5in,
  height=1.5in,
at={(4.4in,0in)},
scale only axis,
xmin=-10,
xmax=40,
xtick={-10,0,10,20,30,40},
xlabel style={font=\color{white!15!black},font=\footnotesize},
xlabel={$\mathrm{SNR}$ [dB]},
xticklabel style = {font=\color{white!15!black},font=\footnotesize},
ymode=log,
ymin=0.003,
ymax=100,
yminorticks=true,
ylabel style={font=\color{white!15!black},font=\footnotesize},
ylabel={RMSE of $o_3$ [$^\circ$]},
yticklabel style = {font=\color{white!15!black},font=\footnotesize},
axis background/.style={fill=white},
xmajorgrids,
ymajorgrids
]
\addplot [color=blue, mark=+, line width=0.8pt, mark options={solid, blue}, mark size=2.5pt, forget plot]
  table[row sep=crcr]{%
  -10	24.8151234566687\\
  0	12.2217547260621\\
  10	0.375126022404133\\
  20	0.194091461801553\\
  30	0.172019029783017\\
  40	0.168322236157438\\
};

\addplot [color=ForestGreen, mark=+, line width=0.8pt, mark options={solid, ForestGreen}, mark size=2.5pt, forget plot]
  table[row sep=crcr]{%
  -10	29.0183911761361\\
  0	12.7061752293422\\
  10	0.325226879641208\\
  20	0.109305422551062\\
  30	0.0463685278770797\\
  40	0.0357946987870218\\
};

\addplot [color=mycolor2, mark=+, line width=0.8pt, mark options={solid, mycolor2}, mark size=2.5pt, forget plot]
  table[row sep=crcr]{%
  -10	29.8011684156976\\
  0	12.808024193705\\
  10	0.326651474239657\\
  20	0.101988821254449\\
  30	0.0330265911086745\\
  40	0.0157890740625926\\
};

\addplot [color=mycolor1, mark=x, line width=0.8pt, mark options={solid, mycolor1}, mark size=2.5pt, forget plot]
  table[row sep=crcr]{%
  -10	29.9577491530352\\
  0	12.8293296973365\\
  10	0.326579144413771\\
  20	0.103090002636957\\
  30	0.0327020671490713\\
  40	0.0115016444274919\\
};

\addplot [color=black, line width=1pt, dashed]
  table[row sep=crcr]{%
  -10	2.54841277473194\\
  0	0.805878878642253\\
  10	0.254841277473301\\
  20	0.0805878878642764\\
  30	0.0254841277473145\\
  40	0.00805878878642693\\
};

\end{axis}

\end{tikzpicture}%
  \vspace{-2em}
  \caption{ 
    The evaluation of \ac{rmse} of $\mathbf{p}_\text{U}$, $\mathbf{p}_\text{R}$ and $o_3$ versus received \ac{snr} for different numbers ($0,1,2,3$) of grid refinements. 
    }
  \label{fig_3}
\end{figure*}
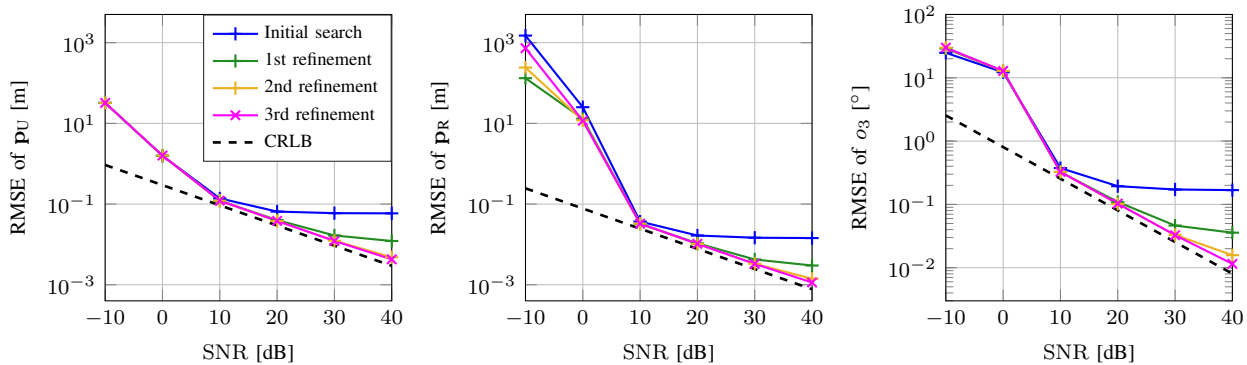

\subsubsection{Channel Estimation Performance}

We first evaluate the performance of the proposed channel estimators.
To provide a benchmark in addition to the \acp{crlb}, the proposed channel estimator is compared with the existing simultaneous orthogonal matching pursuit (SOMP) algorithm which has been verified to offer better channel estimation performance than the original orthogonal matching pursuit (OMP) algorithm~\cite{Tarboush2023Compressive}. We use SOMP to estimate the two strongest paths as the \ac{ue}-\ac{bs} and the \ac{ue}-\ac{ris}-\ac{bs} channels. For implementation details of SOMP, the readers are referred to~\cite{Tarboush2023Compressive,tarboush2023cross}.
To meet the \emph{restricted isometry property}~\cite{Candes2006Near} that SOMP requires, we use random combiner $\mathbf{W}$ and random \ac{ris} profile $\bm{\Upsilon}$ when performing SOMP, while the proposed $\{\mathbf{W},\bm{\Upsilon}\}$ in~\eqref{eq_Wconstraint} and~\eqref{eq_Gammaconstraint} are used when evaluating the proposed methods.
We derive and present the \acp{crlb} for both cases.
The SOMP dictionary sizes for each parameter in $\{\bm{\theta}_\text{L},\bm{\theta}_\text{R},\tau_\text{L},\tau_\text{R},\bm{\vartheta}\}$ are set equally as $D$, where $D=\{2^{12},2^{14},2^{16}\}$.
Besides, the \ac{ls} refinement in~\eqref{eq_LS} is solved using the trust-region method, which is implemented through Manopt toolbox~\cite{manopt} and the number of iterations is set as $T=40$. 

Fig.~\ref{fig_2} shows the evaluation of the \acp{rmse} of $\bm{\theta}_\text{L}$, $\bm{\theta}_\text{R}$, $\tau_\text{L}$, $\tau_\text{R}$ and $\bm{\vartheta}$ versus the received \ac{snr} for the SOMP algorithm, the proposed tensor-ESPRIT coarse estimation, and the proposed \ac{ls}-based refinement.
{It is observed that the proposed algorithm performs better in high-SNR regions compared to the existing SOMP algorithm but is inferior in low-SNR regions.}
While the \acp{rmse} of both coarse estimation methods exhibit large gaps from the \acp{crlb}, the proposed \ac{ls} refinement can significantly reduce the distance to the \acp{crlb} in high-\ac{snr} regions for both methods.\footnote{In the region that $\mathrm{SNR}<\unit[0]{dB}$, the \ac{ls} refinement cannot improve the performance for both SOMP and tensor-\ac{esprit} methods. This can be referred to as the threshold of \textit{no information region}, which is a well-documented phenomenon in maximum likelihood estimators~\cite{Athley2005Threshold}. The same phenomenon can be observed in Fig.~\ref{fig_3}.}
Nevertheless, there are still non-negligible gaps between the results of LS refinement (especially for $\bm{\theta}_\text{L}$, $\bm{\theta}_\text{R}$ and $\tau_\text{L}$) and the theoretical bounds, which result from the mismatch between the used \ac{ls} criterion~\eqref{eq_LS} and the actual statistics of the noise~\eqref{eq_noisenew}.
By comparing the \acp{crlb} of the random and the proposed $\{\bm{\Upsilon},\mathbf{W}\}$, we can observe that the proposed design offers lower bounds on channel parameter estimation, especially for $\bm{\theta}_\text{R}$, $\tau_\text{R}$, and $\bm{\vartheta}$.
Consequently, it is noticed that the proposed method (i.e., tensor-\ac{esprit}+\ac{ls} refinement) provides more accurate estimation of most channel parameters ($\bm{\theta}_\text{R}$, $\tau_\text{R}$, and $\bm{\vartheta}$) than the SOMP+\ac{ls} solution in high-SNR regions.

For reference, the computational complexity of the SOMP algorithm used in this paper is provided as $\mathcal{O}(DN_1N_2N_{\text{B},1}N_{\text{B},2}) + \mathcal{O}(DKN_1N_2)+\mathcal{O}(DKG)+\mathcal{O}(DGN_{\text{R},1}N_{\text{R},2})$.
According to~\eqref{eq_complexity}, the computational complexity of the proposed tensor-\ac{esprit} solution is $\mathcal{O}(KGN_1N_2) + \mathcal{O}(K^2)+\mathcal{O}(N_1^2)+\mathcal{O}(N_2^2)$, which is not a function of $D$ (i.e., search-free). 
The performance of SOMP relies on the dictionary size $D$. A large dictionary that brings heavy computation is needed for the SOMP to offer satisfactory performance, making tensor-ESPRIT preferred in scenarios that require a fast response and low computational load (e.g., as an initialization).

\subsubsection{Localization Performance}

Then, we assess the performance of Algorithm~\ref{algo3} for the second stage of localization parameters estimation.
As the second stage in JrCUP is a specialized problem, there exists no corresponding benchmark method, and only \ac{crlb} are compared.
Fig.~\ref{fig_3} presents the \acp{rmse} of estimating $\mathbf{p}_\text{U}$, $\mathbf{p}_\text{R}$ and $o_3$ versus the received \ac{snr} for different numbers $(0,1,2,3)$ of grid-search refinement iterations and $\kappa=0.1$. 
Here, the input of Algorithm~\ref{algo3} is the result of the proposed tensor-\ac{esprit}+\ac{ls} refinement in the first stage.
It can be observed that in the low \ac{snr} regions (lower than \unit[10]{dB}), the \acp{rmse} stay far from the theoretical bound. In these regions, the input channel parameter estimates contain large errors that lead to localization failure. Thus, increasing the number of grid-search refinements does not improve performance.
In the high \ac{snr} regions (\unit[10]{dB} or higher), however, we can see that the \acp{rmse} decrease as more search refinements are carried out, which indicates localization success. 
The \acp{rmse} follow the \ac{crlb} closely after two or more search iterations are performed.
These results confirm that our proposed algorithms can achieve a nearly efficient localization performance at practical \ac{snr}s (higher than \unit[10]{dB}). 
The refinement dependence of performance presents an unavoidable trade-off between localization accuracy and computational complexity in practice.

\begin{figure}[t]
  \begin{minipage}[b]{0.99\linewidth}
  \centering
%
%
\definecolor{mycolor1}{rgb}{1.00000,1.00000,0.00000}%
\definecolor{ForestGreen}{rgb}{0.1333    0.5451    0.1333}
\definecolor{BlueViolet}{rgb}{ 0.6275    0.1255    0.9412}
\definecolor{DarkGoldenrod}{rgb}{0.9333    0.6784    0.0549}
\definecolor{Coral}{rgb}{1.0000    0.4980    0.3137}
\definecolor{Firebrick}{rgb}{0.8039    0.1490    0.1490}
\definecolor{Navy}{rgb}{0    0    0.7}
\begin{tikzpicture}

\begin{axis}[%
  width=2.4in,
  height=1.5in,
  at={(0in,0in)},
scale only axis,
xmin=0,
xmax=18,
xlabel style={font=\color{white!15!black},font=\footnotesize},
xticklabel style = {font=\color{white!15!black},font=\footnotesize},
xlabel={Number of SPs, $I$},
xtick = {0,6,12,18},
ymode=log,
ymin=0.002,
ymax=7,
yminorticks=true,
ylabel style={font=\color{white!15!black},font=\footnotesize},
yticklabel style = {font=\color{white!15!black},font=\footnotesize},
ylabel={RMSE of {$\bm{\theta}_\text{R}$} [$^\circ$]},
axis background/.style={fill=white},
xmajorgrids,
ymajorgrids,
legend style={at={(1.05,1.07)}, anchor=south east, legend cell align=left, align=left, legend columns=3, draw=white!15!black, font=\tiny}
]
\addplot [color=DarkGoldenrod, line width=0.8pt, mark=triangle, mark options={solid, DarkGoldenrod}, mark size=2.5pt]
  table[row sep=crcr]{%
0	1.62069904355632\\
6	1.61745482710902\\
12	1.62977597236959\\
18	1.63825964025237\\
};
\addlegendentry{SOMP, $D = 2^{12}$}

\addplot [color=Coral, line width=0.8pt, mark=square, mark options={solid, Coral}, mark size=2.3pt]
  table[row sep=crcr]{%
0	0.814564645842275\\
6	0.831188522617283\\
12	0.802971938599737\\
18	0.843599814190016\\
};
\addlegendentry{SOMP, $D = 2^{14}$}

\addplot [color=Firebrick, line width=0.8pt, mark=triangle, mark options={solid, rotate=180, Firebrick}, mark size=2.5pt]
  table[row sep=crcr]{%
0	0.50301005012262\\
6	0.496067282518737\\
12	0.470446980717835\\
18	0.494704202044795\\
};
\addlegendentry{SOMP, $D = 2^{16}$}

\addplot [color=Firebrick, line width=0.8pt, mark=triangle*, mark options={solid, rotate=180, fill=Firebrick, draw=Firebrick}, mark size=2.5pt]
  table[row sep=crcr]{%
0	0.00645351457062163\\
6	0.0138269296742946\\
12	0.039712923258373\\
18	0.0405178874970108\\
};
\addlegendentry{SOMP + LS}

\addplot [color=blue, dashed, mark=o, line width=0.8pt, mark options={solid, blue}, mark size=2.3pt]
  table[row sep=crcr]{%
0	0.26151904830957\\
6	0.409987667540652\\
12	2.63499626232717\\
18	3.36437282994404\\
};
\addlegendentry{ESPRIT}

\addplot [color=blue, dashed, mark=*, line width=0.8pt, mark options={solid, blue}, mark size=2.3pt]
  table[row sep=crcr]{%
0	0.00413868057051906\\
6	0.0230193096062072\\
12	0.0383508036506439\\
18	0.0625131554366955\\
};
\addlegendentry{ESPRIT + LS}

\end{axis}

\end{tikzpicture}%
      \vspace{-2.5em}
      \small
      \centerline{\qquad(a) \ac{rmse} evaluation of channel parameter $\bm{\theta}_\text{R}$}
      \normalsize
  \end{minipage}
  \begin{minipage}[b]{0.99\linewidth}
  \vspace{1em}
  \centering
%
%
\definecolor{mycolor1}{rgb}{1.00000,0.00000,1.00000}%
\begin{tikzpicture}

\begin{axis}[%
  width=2.4in,
  height=1.5in,
  at={(0in,0in)},
scale only axis,
xmin=0,
xmax=18,
xlabel style={font=\color{white!15!black},font=\footnotesize},
xticklabel style = {font=\color{white!15!black},font=\footnotesize},
xlabel={Number of SPs, $I$},
xtick = {0,6,12,18},
ymode=log,
ymin=0.01,
ymax=0.21,
yminorticks=true,
ylabel style={font=\color{white!15!black},font=\footnotesize},
yticklabel style = {font=\color{white!15!black},font=\footnotesize},
ylabel={RMSE [m]},
axis background/.style={fill=white},
xmajorgrids,
ymajorgrids,
yminorgrids,
legend style={at={(1,1.07)}, anchor=south east, legend cell align=left, align=left, legend columns=2, font=\tiny, draw=white!15!black}
]
\addplot [color=mycolor1, mark=square*, line width=0.8pt, mark options={solid, mycolor1}, mark size=2.2pt]
  table[row sep=crcr]{%
0	0.0599936993956086\\
6	0.0662178647680831\\
12	0.096788230868072\\
18	0.206382469892843\\
};
\addlegendentry{$\mathbf{p}_\text{U}$ (SOMP ($D=2^{16}$) + LS + Algo. 2)}

\addplot [color=blue, mark=*, line width=0.8pt, mark options={solid, fill=blue, draw=blue}, mark size=2.5pt]
  table[row sep=crcr]{%
0	0.0604320593708442\\
6	0.063720987948902\\
12	0.108100132648599\\
18	0.153012907314534\\
};
\addlegendentry{$\mathbf{p}_\text{U}$ (proposed)}

\addplot [color=mycolor1, mark=diamond*, line width=0.8pt, mark options={solid, mycolor1}, mark size=2.7pt]
  table[row sep=crcr]{%
0	0.0148211411520685\\
6	0.0163955412256516\\
12	0.0221681745008054\\
18	0.0329960531291467\\
};
\addlegendentry{$\mathbf{p}_\text{R}$ (SOMP ($D=2^{16}$) + LS + Algo. 2)}

\addplot [color=blue,  mark=triangle*, line width=0.8pt, mark options={solid, fill=blue, draw=blue}, mark size=2.5pt]
  table[row sep=crcr]{%
0	0.014859931038444\\
6	0.0151680180575772\\
12	0.0212999998200252\\
18	0.0418431344760304\\
};
\addlegendentry{$\mathbf{p}_\text{R}$ (proposed)}

\end{axis}
\end{tikzpicture}%
      \vspace{-2.5em}
      \small
      \centerline{\qquad(b) \ac{rmse} evaluation of localization parameters $\mathbf{p}_\text{U}$ and $\mathbf{p}_\text{R}$}
      \normalsize
  \end{minipage}
    \caption{ 
    The evaluation of \acp{rmse} of the estimated channel parameters and localization parameters under the multipath effect. The tested numbers of \acp{sp} are set as $I=\{0,6,12,18\}$.  
    (a) \acp{rmse} of $\bm{\theta}_\text{R}$ by SOMP ($D=\{2^{12},2^{14},2^{16}\}$), SOMP+\ac{ls} refinement, the proposed tensor-\ac{esprit}, and the proposed tensor-\ac{esprit}+\ac{ls} refinement; 
    (b) \acp{rmse} of $\mathbf{p}_\text{U}$ and $\mathbf{p}_\text{R}$ by SOMP ($D=2^{16}$)+\ac{ls} refinement+ Algorithm~\ref{algo3} and  the proposed method (i.e., tensor-\ac{esprit}+\ac{ls} refinement+ Algorithm~\ref{algo3}).}
    \label{fig_multipath}
\end{figure}

\subsubsection{Impact of Multipath}

The impact of the multipath effect is evaluated in Fig.~\ref{fig_multipath}.
In this trial, we set the number of \acp{sp} in different channels as $I_\text{L}=I_{\text{R},1}=I_{\text{R},2}=I$.
For each of the \ac{ue}-\ac{bs}, the \ac{ue}-\ac{ris}, and the \ac{ris}-\ac{bs} channel, we randomly generate $I$ \acp{sp} within the space defined by $\unit[-5]{m}<x<\unit[5]{m},\unit[-5]{m}<y<\unit[5]{m},\unit[0]{m}<z<\unit[5]{m}$, to produce a total of $3I$ \acp{sp}.
The RCS coefficients of all \ac{nlos} paths are fixed as $c_{L,i} = c_{\text{R},1,i} = c_{\text{R},2,i} = \unit[0.5]{m^2},\ i=1,\dots,I$~\cite{Chen2023Multi}.
The received SNR is set as $30$ dB.

Fig.~\ref{fig_multipath}-(a) illustrates the multipath effect on the channel estimation performance of the proposed method together with benchmark methods. The parameter $\bm{\theta}_\text{R}$ is considered as a representative.  
From Fig.~\ref{fig_multipath}-(a), we can see that the SOMP algorithm is more robust to the multipath effect, as its estimation error remains stable with the increase of \acp{sp}. On the other hand, the estimation error of the tensor-\ac{esprit} increases with the increase of \acp{sp}.
This can be attributed to the SOMP's strategy that involves matching the atoms with the highest correlation in the dictionary, making the matching results less sensitive to weak multipath noise.
However, since the proposed tensor-ESPRIT approach is based on tensor decomposition, the structured noise introduced by \ac{nlos} multipath can increase the rank of the channel tensor, which in turn directly affects the decomposition result.
Nonetheless, both the performance of SOMP and tensor-\ac{esprit} can be effectively improved (to a similar level) by applying the \ac{ls} refinement.
The corresponding \acp{rmse} of $\mathbf{p}_\text{U}$ and $\mathbf{p}_\text{R}$ are shown in Fig.~\ref{fig_multipath}-(b).
It is clearly shown that after the proposed \ac{ls} refinement and running the localization algorithm, the SOMP and tensor-\ac{esprit} reach a similar localization accuracy. 
The more severe the multipath effect, the higher the estimation errors for both methods.
It is worth noting that in environments with sparse \ac{nlos} multipath (e.g., $I<6$) that most mmWave/THz wireless systems can satisfy~\cite{Sarieddeen2021Overview,He2022Beyond}, the final positioning accuracy remains very close to that of the multipath-free case (i.e., $I=0$), which demonstrates the robustness of the proposed \ac{ls} refinement and localization algorithm in both cases of initialization using SOMP and tensor-\ac{esprit}.

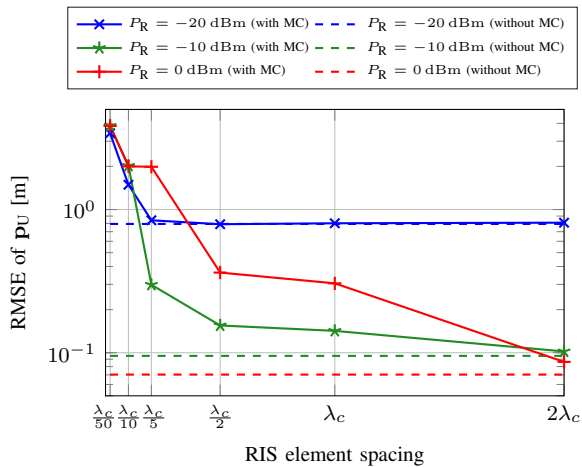
\begin{figure}[t]
  \begin{minipage}[b]{1\linewidth}
    \centering
      \definecolor{ForestGreen}{rgb}{0.1333    0.5451    0.1333}
\definecolor{DarkGoldenrod}{rgb}{0.9333    0.6784    0.0549}
\definecolor{BlueViolet}{rgb}{ 0.6275    0.1255    0.9412}
\definecolor{Firebrick}{rgb}{0.8039    0.1490    0.1490}
\begin{tikzpicture}

\begin{axis}[%
  width=2.4in,
  height=1.5in,
at={(0in,0in)},
scale only axis,
xmin=0,
xmax=2,
  xlabel style={font=\color{white!15!black},font=\footnotesize},
  xticklabel style = {font=\color{white!15!black},font=\footnotesize},
  xtick = {0.02,0.1,0.2,0.5,1,2},
  xticklabels = {{\tiny $\frac{\lambda_c}{50}\quad$},{\tiny $\frac{\lambda_c}{10}$},{\tiny $\ \frac{\lambda_c}{5}$},{\tiny $\frac{\lambda_c}{2}$},${\lambda_c}$,${2\lambda_c}$},
xlabel={RIS element spacing},
ymode=log,
ymin=0.05,
ymax=5,
yminorticks=true,
  ylabel style={font=\color{white!15!black},font=\footnotesize},
  yticklabel style = {font=\color{white!15!black},font=\footnotesize},
ylabel={RMSE of $\mathbf{p}_\text{U}$ [m]},
axis background/.style={fill=white},
xmajorgrids,
ymajorgrids,
  legend style={at={(1.03,1.07)}, anchor=south east, legend columns=2, legend cell align=left, align=left, draw=white!15!black, font=\tiny}
]
\addplot [color=blue, mark=x, line width=0.8pt, mark options={solid, blue}, mark size=2.5pt]
  table[row sep=crcr]{%
0.02	3.41103871414426\\
0.1	1.48904781768689\\
0.2	0.8409357203368\\
0.5	0.789469226034377\\
1	0.801460469201778\\
2	0.809321051280247\\
};
\addlegendentry{$P_\text{R}=\unit[-20]{dBm}$ (with MC)}

\addplot [color=blue, line width=0.8pt, dashed]
  table[row sep=crcr]{%
0.02	0.793168924299175\\
0.1	0.793168924299175\\
0.2	0.793168924299175\\
0.5	0.793168924299175\\
1	0.793168924299175\\
2	0.793168924299175\\
};
\addlegendentry{$P_\text{R}=\unit[-20]{dBm}$ (without MC)}

\addplot [color=ForestGreen, line width=0.8pt, mark=star, mark options={solid, ForestGreen}, mark size=2.5pt]
  table[row sep=crcr]{%
0.02	3.84241745750017\\
0.1	2.0203727135893\\
0.2	0.298282175607212\\
0.5	0.154979050634374\\
1	0.142061077384493\\
2	0.102067317291177\\
};
\addlegendentry{$P_\text{R}=\unit[-10]{dBm}$ (with MC)}

\addplot [color=ForestGreen, line width=0.8pt, dashed]
  table[row sep=crcr]{%
0.02	0.0950299133757751\\
0.1	0.0950299133757751\\
0.2	0.0950299133757751\\
0.5	0.0950299133757751\\
1	0.0950299133757751\\
2	0.0950299133757751\\
};
\addlegendentry{$P_\text{R}=\unit[-10]{dBm}$ (without MC)}

\addplot [color=red, mark=+, line width=0.8pt, mark options={solid, red}, mark size=2.5pt]
  table[row sep=crcr]{%
0.02	3.8339094556171\\
0.1	2.00279378587305\\
0.2	1.9882741544784\\
0.5	0.362778262400599\\
1	0.305157178566445\\
2	0.0862686598150372\\
};
\addlegendentry{$P_\text{R}=\unit[0]{dBm}$ (with MC)}

\addplot [color=red, line width=0.8pt, dashed]
  table[row sep=crcr]{%
0.02	0.0703472731388794\\
0.1	0.0703472731388794\\
0.2	0.0703472731388794\\
0.5	0.0703472731388794\\
1	0.0703472731388794\\
2	0.0703472731388794\\
};
\addlegendentry{$P_\text{R}=\unit[0]{dBm}$ (without MC)}
\end{axis}

\end{tikzpicture}%
      \vspace{-2.5em}
  \end{minipage}
  \caption{
    {Evaluation of \acp{rmse} of $\mathbf{p}_\text{U}$ versus RIS element spacing considering MC. Different active RIS powers $P_\text{R}=\{-20,-10,0\}\ \text{dBm}$ are tested.}
  }
  \label{fig_MC}
\end{figure}

{
\subsubsection{Impact of RIS Mutual Coupling} \label{sec_MC}

As previously mentioned in Subsection~\ref{eq_IIA}, the utilization of active RIS amplifies the impact of the MC among RIS elements, rendering it unignorable.
When MC is taken into account, the reflection matrix of RIS (denoted as $\tilde{\bm{\Gamma}}_g$) is given by~\cite{Shen2022Modeling,Wijekoon2023Beamforming}
\begin{equation}\label{eq_withMC}
	\tilde{\bm{\Gamma}}_g = (\bm{\Gamma}_g^{-1}-\mathbf{S})^{-1},\ g=1,\dots,G,
\end{equation} 
where $\mathbf{S}$ denotes the scattering matrix of RIS elements.
According to microwave network theory~\cite{Pozar2011Microwave,Shen2022Modeling}, the scattering matrix $\mathbf{S}$ is given by $\mathbf{S} = (\mathbf{Z}+Z_0\mathbf{I})^{-1}(\mathbf{Z}-Z_0\mathbf{I})$, where $\mathbf{Z}$ denotes the impedance matrix of RIS elements and $Z_0$ is the reference impedance (typically $Z_0=\unit[50]{\Omega}$).
In general, the matrices $\mathbf{S}$ and $\mathbf{Z}$ can be acquired through standard electromagnetic solvers such as CST Microwave Studio~\cite{Rao2023Active}. For the sake of simulation convenience, we adopt the analytical model in~\cite{Di2023Modeling}.
By assuming all the RIS antennas are cylindrical thin wires of perfectly conducting material, the mutual impedances between every pair of scattering elements of RIS can be explicitly calculated using~\cite[Eq. (2)]{Di2023Modeling} or \cite[Eq. (3)]{Zheng2023Impact}. 
The results in~\cite{Qian2021Mutual,Zheng2023Impact} reveal that a denser integration of RIS elements generally generates a greater impact on, e.g., received signal power and channel estimation performance.

Fig.~\ref{fig_MC} presents the \acp{rmse} of $\mathbf{p}_\text{U}$ versus RIS element spacings for different active RIS powers $P_\text{R}=\{-20,-10,0\}\ \text{dBm}$.
The cases with MC use the reflection matrix $\tilde{\bm{\Gamma}}_g$ in~\eqref{eq_withMC}, while the cases without MC use $\bm{\Gamma}_g$. 
To obtain the best performance, we perform both tensor-ESPRIT and SOMP at the coarse channel estimation stage and choose the result with lower residual error in~\eqref{eq_LS} to initialize the LS refinement; then the localization parameters are obtained through Algorithm~\ref{algo3}. We fix the received SNR as $\unit[30]{dB}$, and the other parameters are set according to Table~\ref{tab1}.
It can be observed that the shorter the RIS element spacing, the higher the estimation \ac{rmse}, which coincides with the results in~\cite{Qian2021Mutual,Zheng2023Impact}.
Furthermore, the gap between the cases with and without MC increases as we enlarge $P_\text{R}$ at fixed RIS element spacing, revealing that a higher active RIS power accentuates the impact of MC.
A noteworthy phenomenon is that in the absence of MC, the higher the active RIS power, the lower the estimation error. But this rule no longer holds when MC is considered. 
Higher RIS power helps to increase the signal strength but also amplifies the impact of MC, which implies that increasing the RIS power is not always beneficial. For instance, the case with $P_\text{R}=\unit[-10]{dBm}$ can provide better localization accuracy than $P_\text{R}=\unit[0]{dBm}$ when the RIS element spacing is less than $\lambda_c$.
This result reveals that an optimal active RIS power exists when MC is taken into account.
}

\subsection{Active \ac{ris} versus passive \ac{ris}}

\begin{figure}[t]
  \begin{minipage}[b]{0.99\linewidth}
  \centering
%
%
\definecolor{ForestGreen}{rgb}{0.1333    0.5451    0.1333}
\definecolor{DarkGoldenrod}{rgb}{0.9333    0.6784    0.0549}
\definecolor{BlueViolet}{rgb}{ 0.6275    0.1255    0.9412}
\definecolor{Firebrick}{rgb}{0.8039    0.1490    0.1490}

\begin{tikzpicture}

\begin{axis}[%
  width=2.4in,
  height=1.5in,
  at={(0in,0in)},
scale only axis,
xmin=-80,
xmax=40,
xlabel style={font=\color{white!15!black},font=\footnotesize},
xticklabel style = {font=\color{white!15!black},font=\footnotesize},
xlabel={$P_\text{var}$ [dBm]},
ymode=log,
ymin=0.001,
ymax=15,
yminorticks=true,
ylabel style={font=\color{white!15!black},font=\footnotesize},
yticklabel style = {font=\color{white!15!black},font=\footnotesize},
ylabel={$\mathrm{EB}(\mathbf{p}_\text{U})$  [m]},
axis background/.style={fill=white},
xmajorgrids,
ymajorgrids,
legend style={at={(0,0)}, anchor=south west, legend cell align=left, align=left, draw=white!15!black, font=\tiny}
]
\addplot [color=ForestGreen, line width=0.8pt]
  table[row sep=crcr]{%
-80	10.3978451505809\\
-70	10.3780524714206\\
-60	10.1861479095833\\
-50	8.71407082370734\\
-40	4.5419694457929\\
-30	1.57960739448338\\
-20	0.509497328708976\\
-10	0.175420951262903\\
0	0.0884603379284453\\
10	0.0743345077776185\\
20	0.072759954825996\\
30	0.0725969992721638\\
40	0.0725795390232516\\
50	0.0725774308806538\\
};
\addlegendentry{$K=32, G=9$}

\addplot [color=ForestGreen, line width=0.8pt, dashed, forget plot]
  table[row sep=crcr]{%
-80	10.399799705678\\
-70	10.3997996120798\\
-60	10.3997986760979\\
-50	10.3997893162885\\
-40	10.3996957191206\\
-30	10.3987598400938\\
-20	10.389410305772\\
-10	10.2968314020572\\
0	9.45436337825254\\
10	5.19989985803889\\
20	0.945436337825254\\
30	0.102968314020572\\
40	0.010389410305772\\
50	0.00103987598400938\\
};

\addplot [color=blue, line width=0.8pt]
  table[row sep=crcr]{%
-80	1.38265008773504\\
-70	1.38001820531443\\
-60	1.35450034180476\\
-50	1.15875631021899\\
-40	0.60399957522037\\
-30	0.210175115922603\\
-20	0.0681741146405112\\
-10	0.0245546204907199\\
0	0.0140535457842903\\
10	0.01252789439617\\
20	0.0123649527124855\\
30	0.0123485327318277\\
40	0.0123468871529263\\
50	0.0123467218780059\\
};
\addlegendentry{$K=32, G=169$}

\addplot [color=blue, line width=0.8pt, dashed, forget plot]
  table[row sep=crcr]{%
-80	1.382888344975\\
-70	1.38288833252901\\
-60	1.38288820806907\\
-50	1.38288696347093\\
-40	1.38287451761271\\
-30	1.38275007135075\\
-20	1.38150683951837\\
-10	1.36919638253256\\
0	1.25717122396172\\
10	0.691444173178945\\
20	0.125717122396172\\
30	0.0136919638253256\\
40	0.00138150683951837\\
50	0.000138275007135075\\
};
\addplot [color=red, line width=0.8pt]
  table[row sep=crcr]{%
-80	0.488839917831294\\
-70	0.487909414117702\\
-60	0.478887497827154\\
-50	0.409681629767233\\
-40	0.213545775328375\\
-30	0.0743079946371235\\
-20	0.0241031287072115\\
-10	0.00868133289195414\\
0	0.00496865777910204\\
10	0.00442926775027439\\
20	0.0043716624300029\\
30	0.00436585812712933\\
40	0.00436527665362605\\
50	0.00436521830777691\\
};
\addlegendentry{$K=128, G=169$}

\addplot [color=red, line width=0.8pt, dashed, forget plot]
  table[row sep=crcr]{%
-80	0.488924166321243\\
-70	0.488924161920925\\
-60	0.488924117917755\\
-50	0.488923677886489\\
-40	0.488919277617391\\
-30	0.488875279282239\\
-20	0.488435731079088\\
-10	0.484083333475413\\
0	0.44447651528197\\
10	0.244462083405083\\
20	0.044447651528197\\
30	0.00484083333475413\\
40	0.000488435731079088\\
50	4.88875279282239e-05\\
};
\end{axis}

\begin{axis}[%
  width=2.4in,
  height=1.5in,
  at={(0in,0in)},
  scale only axis,
  xmin=0,
  xmax=1,
  ymin=0,
  ymax=1,
  axis line style={draw=none},
  ticks=none,
  axis x line*=bottom,
  axis y line*=left
  ]
  \draw [black] (axis cs:0.6,0.78) ellipse [x radius=0.03, y radius=0.21];
  \node[right, align=left]
    at (axis cs:0.64,0.84) {\footnotesize{Passive RIS}};
  \draw [black] (axis cs:0.7,0.3) ellipse [x radius=0.03, y radius=0.2];
  \node[right, align=left]
    at (axis cs:0.69,0.06) {\footnotesize{Active RIS}};
  \end{axis}

\end{tikzpicture}%
      \vspace{-2.5em}
      \small
      \centerline{\qquad(a) $\mathrm{EB}(\mathbf{p}_\text{U})$ evaluations for different $K,G$}
      \normalsize
  \end{minipage}
  \begin{minipage}[b]{0.99\linewidth}
  \vspace{1em}
  \centering
%
%
\definecolor{ForestGreen}{rgb}{0.1333    0.5451    0.1333}
\definecolor{DarkGoldenrod}{rgb}{0.9333    0.6784    0.0549}
\definecolor{BlueViolet}{rgb}{ 0.6275    0.1255    0.9412}
\definecolor{Firebrick}{rgb}{0.8039    0.1490    0.1490}

\begin{tikzpicture}

\begin{axis}[%
  width=2.4in,
  height=1.5in,
  at={(3.3in,0in)},
  scale only axis,
  xmin=-80,
  xmax=40,
  xlabel style={font=\color{white!15!black},font=\footnotesize},
  xticklabel style = {font=\color{white!15!black},font=\footnotesize},
  xlabel={$P_\text{var}$ [dBm]},
  ymode=log,
  ymin=0.02,
  ymax=300,
  yminorticks=true,
  ylabel style={font=\color{white!15!black},font=\footnotesize},
  yticklabel style = {font=\color{white!15!black},font=\footnotesize},
  ylabel={$\mathrm{EB}(\mathbf{p}_\text{U})$  [m]},
  axis background/.style={fill=white},
  xmajorgrids,
  ymajorgrids,
  legend style={at={(0,0)}, anchor=south west, legend cell align=left, align=left, draw=white!15!black, font=\tiny}
  ]
  \addplot [color=ForestGreen, line width=0.8pt]
    table[row sep=crcr]{%
  -80	197.069655709081\\
  -70	193.779060592011\\
  -60	167.956425204864\\
  -50	89.9372100847818\\
  -40	31.5410438850336\\
  -30	10.1032734557485\\
  -20	3.23831102706595\\
  -10	1.14189703416633\\
  0	0.620810281047265\\
  10	0.54179912676798\\
  20	0.533252947244004\\
  30	0.532390286509832\\
  40	0.532303782858068\\
  50	0.532295075092289\\
  };
  \addlegendentry{RIS size $5\times 5$}
  
  \addplot [color=ForestGreen, line width=0.8pt, dashed, forget plot]
    table[row sep=crcr]{%
-80	197.446183067513\\
-70	197.446181290497\\
-60	197.446163520343\\
-50	197.445985818973\\
-40	197.444208822871\\
-30	197.426440620897\\
-20	197.248934330628\\
-10	195.491270559365\\
0	179.496530240872\\
10	98.7230916324795\\
20	17.9496530240872\\
30	1.95491270559365\\
40	0.197248934330628\\
50	0.0197426440620897\\
  };
  
  \addplot [color=blue, line width=0.8pt]
    table[row sep=crcr]{%
  -80	22.2624045630298\\
  -70	22.1674421283552\\
  -60	21.280149575998\\
  -50	15.9300501407319\\
  -40	6.85839455244078\\
  -30	2.27046844829778\\
  -20	0.73003929036665\\
  -10	0.256781074080269\\
  0	0.138494331749693\\
  10	0.120427608252023\\
  20	0.118465995654172\\
  30	0.118266897869552\\
  40	0.118246606797015\\
  50	0.118244462953107\\
  };
  \addlegendentry{RIS size $10\times 10$}

  \addplot [color=blue, line width=0.8pt, dashed, forget plot]
    table[row sep=crcr]{%
-80	22.272718421885\\
-70	22.2727182214305\\
-60	22.2727162168861\\
-50	22.2726961714616\\
-40	22.2724957192005\\
-30	22.2704913950182\\
-20	22.2504679761815\\
-10	22.0521964793641\\
0	20.2479258583252\\
10	11.1363592220789\\
20	2.02479258583252\\
30	0.220521964793641\\
40	0.0222504679761815\\
50	0.00222704913950182\\
  };
  \addplot [color=red, line width=0.8pt]
    table[row sep=crcr]{%
  -80	10.3978451505809\\
  -70	10.3780524714206\\
  -60	10.1861479095833\\
  -50	8.71407082370734\\
  -40	4.5419694457929\\
  -30	1.57960739448338\\
  -20	0.509497328708976\\
  -10	0.175420951262903\\
  0	0.0884603379284453\\
  10	0.0743345077776185\\
  20	0.072759954825996\\
  30	0.0725969992721638\\
  40	0.0725795390232516\\
  50	0.0725774308806538\\
  };
  \addlegendentry{RIS size $15\times 15$}

  \addplot [color=red, line width=0.8pt, dashed, forget plot]
    table[row sep=crcr]{%
-80	10.399799705678\\
-70	10.3997996120798\\
-60	10.3997986760979\\
-50	10.3997893162885\\
-40	10.3996957191206\\
-30	10.3987598400938\\
-20	10.389410305772\\
-10	10.2968314020572\\
0	9.45436337825254\\
10	5.19989985803889\\
20	0.945436337825254\\
30	0.102968314020572\\
40	0.010389410305772\\
50	0.00103987598400938\\
  };

\end{axis}

\begin{axis}[%
  width=2.4in,
  height=1.5in,
  at={(3.3in,0in)},
  scale only axis,
  xmin=0,
  xmax=1,
  ymin=0,
  ymax=1,
  axis line style={draw=none},
  ticks=none,
  axis x line*=bottom,
  axis y line*=left
  ]
  \draw [black] (axis cs:0.6,0.78) ellipse [x radius=0.03, y radius=0.21];
  \node[right, align=left]
    at (axis cs:0.64,0.82) {\footnotesize{Passive RIS}};
  \draw [black] (axis cs:0.7,0.24) ellipse [x radius=0.03, y radius=0.15];
  \node[right, align=left]
    at (axis cs:0.69,0.06) {\footnotesize{Active RIS}};
  \end{axis}

\end{tikzpicture}%
      \vspace{-2.5em}
      \small
      \centerline{\qquad(b) $\mathrm{EB}(\mathbf{p}_\text{U})$ evaluations for different \ac{ris} size}
      \normalsize
  \end{minipage}
    \caption{ 
    The evaluation of $\mathrm{EB}(\mathbf{p}_\text{U})$ versus active \ac{ris} power supply level for the active and passive \ac{ris} setups. 
    (a) Evaluation for the cases $\{K=32,G=9\}$, $\{K=32,G=169\}$, and $\{K=128,G=169\}$ while other parameters are fixed as in Table~\ref{tab1}; 
    (b) Evaluation for the cases $N_{\text{R},1}\times N_{\text{R},2}=5\times 5, 10\times 10, 15\times 15$ while other parameters are fixed as in Table~\ref{tab1}.
    }
  \label{fig_4}
\end{figure}

Based on the channel model~\eqref{eq_y}, the active \ac{ris} provides power gain while also introducing additional noise. Therefore, the combined effect needs to be evaluated and compared with the passive case.
As a representation of the localization performance, we evaluate the value of $\mathrm{EB}(\mathbf{p}_\text{U})$ over different power supplies in a multipath-free and MC-free scenario.
{In this trial, we define $P_\text{var}$ denoting the additional system power. For active RIS cases, we fix the transmission power as $P_{\text{T}}$ in Table~\ref{tab1}, and set $P_{\text{R}} = P_\text{var}$. The passive \ac{ris} cases are simulated by setting RIS power $P_{\text{R}}=0$ (thus $p=1$ according to~\eqref{eq_PR_p}) and $\sigma_r=0$.
To guarantee a fair comparison, we set the transmission power in the passive RIS cases as $P_{\text{T}}+P_\text{var}$, thus the total power supply of the system stays the same as in active cases. We evaluate $\mathrm{EB}(\mathbf{p}_\text{U})$ over different values of~$P_\text{var}$ from \unit[-80]{dBm} to \unit[40]{dBm}.}

Fig.~\ref{fig_4} (a) demonstrates the results for different numbers of subcarriers ($K=\{32,128\}$) and transmissions ($G=\{9,169\}$).
It is clearly shown that $\mathrm{EB}(\mathbf{p}_\text{U})$ of the active \ac{ris} decreases with the increase of RIS power supply until saturating at around $P_\text{var}=\unit[0]{dBm}$. 
This shows that, even with the introduction of more noise, the power gain from active RIS can still provide a positive improvement in localization performance.
The performance saturation for the active RIS can be explained by analyzing the noise pattern. 
When the \ac{ris} power is large, the noise introduced by active \ac{ris} $\mathbf{n}_r$ dominates $\mathbf{n}_0$.
As the noise and signal powers at the \ac{ris} channel are boosted equally, the estimation performance saturates.
{In contrast, allocating the additional power to the transmitter (i.e., the passive RIS case) can continuously improve localization performance when $P_\text{var}>\unit[0]{dBm}$.
However, it is also noted that in the passive RIS case, increasing $P_\text{var}$ offers little improvement to localization performance when $P_\text{var}<\unit[0]{dBm}$. 
The comparison concludes that in a practical region (i.e.,~$\unit[-60]{dBm}<P_\text{var}<\unit[30]{dBm}$), allocating an extra power budget to the RIS can provide a higher localization accuracy than allocating the same power to the transmitter.
Additionally, Fig.~\ref{fig_4}~(a) shows that increasing the number of subcarriers $K$ and transmissions $G$ helps to improve the localization performance for both the active and passive \ac{ris} cases.
Furthermore, larger $\{K,G\}$ enables the passive RIS to outperform the active RIS starting from a lower $P_\text{var}$ value.
Fig.~\ref{fig_4} (b) demonstrates the results for different \ac{ris} sizes, which reveal that increasing the \ac{ris} size can also improve the localization performance for both \ac{ris} types.
}

\subsection{The Blind Areas Analysis}

\subsubsection{Blind Areas Visualization and Interpretation}
\begin{figure*}[t]
  \centering
      \centerline{\includegraphics[width=0.95\linewidth]{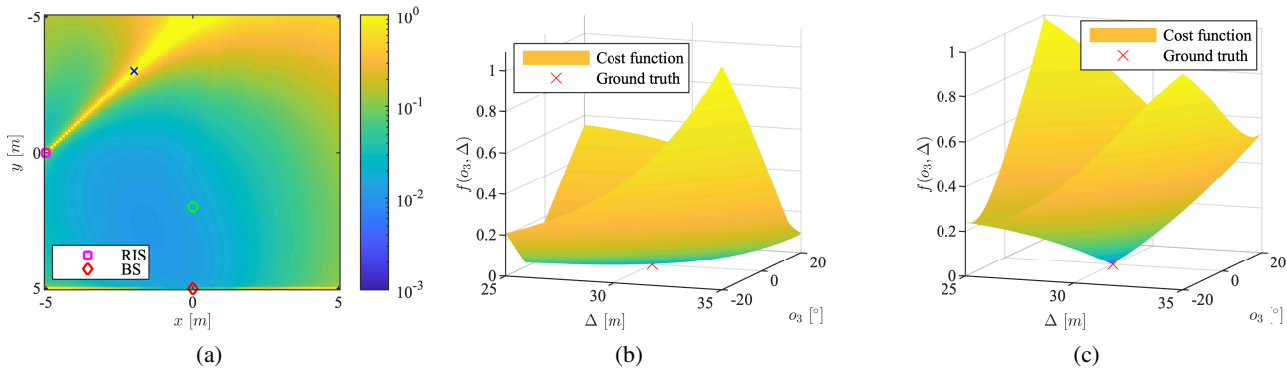}}
      \small
      \centerline{\hspace{3cm} (a) \hspace{5cm} (b) \hspace{5.5cm} (c) \hspace{3cm}}
      \caption{The visualization of the blind areas. (a)~$\mathrm{EB}(\mathbf{p}_\text{U})$ versus $\mathbf{p}_\text{U}$, where the geometries of the \ac{bs} and the \ac{ris} are fixed as in Table~\ref{tab1}. In addition, two sample locations of the blind area (blue cross) and non-blind area (green circle) are shown; (b)~$f(o_3,\Delta)$ of the blind location in (a); (c)~$f(o_3,\Delta)$ of the non-blind location in (a).}
      \normalsize
      \label{fig_BA1}
\end{figure*}

So far we have presented results for scenarios with fixed \ac{ue} and \ac{ris} positions and orientations. In this subsection, we examine the spatial variability of performance. To this end, the localization \acp{crlb} (take $\mathrm{EB}(\mathbf{p}_\text{U})$ as a representative) are computed over different \ac{ue} positions while the \ac{bs} and \ac{ris} positions and orientations are kept fixed. 
We assume the \ac{ue} to be placed across a $\unit[10]{m}\times\unit[10]{m}$ space at a fixed height of \unit[1]{m}, and the \ac{bs} and \ac{ris} are deployed as default parameters in Table~\ref{tab1}. The corresponding results are shown in Fig.~\ref{fig_BA1}~(a). 
From Fig.~\ref{fig_BA1}~(a), we can observe the presence of areas with extremely high \ac{crlb} (yellow areas). 
Since these areas with high \ac{crlb} yield a poor localization performance or are even unable to perform localization because of the existence of the ambiguity~\cite{Lu2022Joint}, we name those areas as the \emph{blind areas}. 
In Fig.~\ref{fig_BA1}~(a), we select two sample locations of the blind area (blue cross) and non-blind area (green circle), respectively, for further investigation.

Fig.~\ref{fig_BA1}~(b) and (c) visualize the cost function $f({o}_3,{\Delta})$ in \eqref{eq_cost} with the noise-free observations $\bm{\eta}$ for the selected blind and non-blind locations of the \ac{ue} as marked in Fig.~\ref{fig_BA1}~(a). 
The ground-truth values of ${o}_3$ and ${\Delta}$ are marked with a red cross. In both cases, we can see that the ground truth coincides with a local minimum point, which indicates that Algorithm~\ref{algo3} can converge to the ground truth given a proper initialization search of the interval.
However, the cost function of the blind location is flat around the ground truth, while the non-blind location shows a sharper descending structure.
This implies that the uncertainty in the blind location is much higher than that in the non-blind location,  
and so the blind locations would produce a larger estimation error than the non-blind locations in the noisy case, which results in the extremely high \acp{crlb} in Fig.~\ref{fig_BA1}~(a).
To avoid localization in blind areas, we propose two strategies, namely, leveraging more prior information and adding extra \acp{bs}.
  
\subsubsection{Evaluation of the Impact of Extra Prior Information}

  \begin{figure}[t]
  \centering
  \centerline{\includegraphics[width=3.4in]{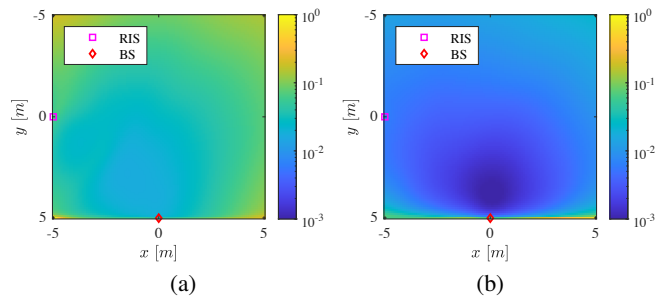}}
  \small
    \centerline{ (a) \hspace{3.5cm} (b)}
  \normalsize
  \caption{ 
    $\mathrm{EB}(\mathbf{p}_\text{U})$ versus $\mathbf{p}_\text{U}$. (a)~The \ac{ris}'s orientation $o_3$ is known; (b)~The clock bias $\Delta$ is known.
    }
    \label{fig_BA2}
  \end{figure}

We first assess the impact of using extra prior information on unknown localization parameters. 
We test two types of prior information, i.e., the \ac{ris}'s orientation $o_3$ and the clock bias $\Delta$. 
Assume that we know the values of these two parameters in advance. Then, we remove the corresponding columns in the Jacobian matrix $\mathbf{J}$ in~\eqref{eq_Ixi} and calculate a new \ac{crlb} accordingly. 
The results are shown in Fig.~\ref{fig_BA2}.
It can be seen that with prior knowledge of $o_3$ or $\Delta$, the blind area is greatly reduced. 
Furthermore, prior knowledge of $\Delta$ seems to provide a better performance in eliminating the blind area effect compared to using $o_3$.
This can be explained by the fact that $\Delta$ contains more information than $o_3$ regarding the geometry of the \ac{ue} and the \ac{ris}.
For example, once we know the clock bias $\Delta$, we can determine the positions of the \ac{ue} and the \ac{ris} immediately through~\eqref{eq_dldr}--\eqref{eq_pr}, while the \ac{ris} orientation $o_3$ cannot provide further information other than itself.

\subsubsection{Evaluation of the Impact of Additional \acp{bs}}

  \begin{figure}[t]
  \centering
  \centerline{\includegraphics[width=3.4in]{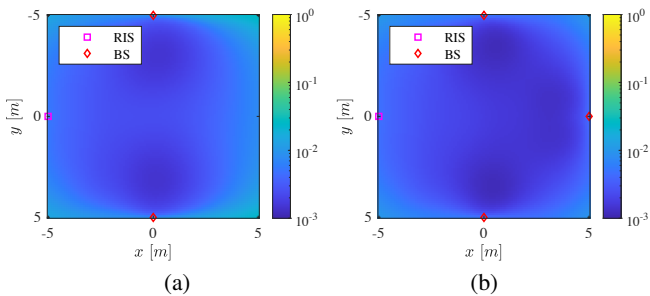}}
  \small
    \centerline{ (a) \hspace{3.5cm} (b)}
  \normalsize
  \caption{ 
       $\mathrm{EB}(\mathbf{p}_\text{U})$ versus $\mathbf{p}_\text{U}$. (a)~Add one \ac{bs} at $[0,-5,3]^\mathsf{T}$ with orientation $[0,0,\pi/2]^\mathsf{T}$; (b)~Add two \acp{bs} at $\{[0,-5,3]^\mathsf{T},[5,0,3]^\mathsf{T}\}$ with orientation $\{[0,0,\pi/2]^\mathsf{T},[0,0,\pi]^\mathsf{T}\}$.
    }
  \label{fig_BA3}
  \end{figure}

Finally, we evaluate the impact of adding more \acp{bs}.
Fig.~\ref{fig_BA3} demonstrates the value of $\mathrm{EB}(\mathbf{p}_\text{U})$ for \ac{ue} at different positions under different amount of \acp{bs}.
In Fig.~\ref{fig_BA3} (a), an additional \ac{bs} is introduced at $[0,-5,3]^\mathsf{T}$ with orientation $[0,0,\pi/2]^\mathsf{T}$, while one more \ac{bs} is added at $[5,0,3]^\mathsf{T}$ with orientation $[0,0,\pi]^\mathsf{T}$ in Fig.~\ref{fig_BA3} (b).
For each case, we collect parallel observations $\mathbf{y}_{g,k}$ from multiple \acp{bs}, which is dependent on the same geometric parameters of the \ac{ue} and the \ac{ris}.
Then the \acp{crlb} are rederived and evaluated. 
We can observe that adding more \acp{bs} can significantly reduce the blind areas. The more \acp{bs} we deploy, the lower the overall localization bounds.

\section{Conclusion}
\label{sec_conclusion}

In this paper, we formulated and solved a joint \ac{ris} calibration and user positioning problem for an active RIS-assisted uplink \ac{simo} system, where the 3D user position, 3D RIS position, 1D RIS orientation, and clock bias were estimated. A two-stage localization method has been proposed, which consists of a coarse channel parameter estimation using tensor-\ac{esprit}, a channel parameters refinement via \ac{ls} estimation, and a \ac{2d} search-based localization algorithm. 
The fundamental \acp{crlb} for the channel parameter and localization parameter estimation were derived.
Through simulation studies, we demonstrated the effectiveness of the proposed algorithms by comparing their estimation \acp{rmse} with those of the existing SOMP algorithm and the derived \acp{crlb}. 
In addition, a comparison between the performance of the active and passive \acp{ris} was carried out, which showed that active \acp{ris} outperform the passive \ac{ris}s in terms of localization accuracy within practical power supply regions. 
Furthermore, we show that blind areas exist in the \ac{jrcup} problem, which can be interpreted by the uncertainty in the cost function. 
Two strategies are proposed to combat blind areas, namely, using additional prior information (e.g., from extra sensors) or deploying more \acp{bs}. These strategies have been verified by numerical simulations.
Future research on the \ac{jrcup} problem includes extending the formulation to multi-user/multi-\ac{ris} scenarios, addressing \ac{3d} \ac{ris} orientation estimation, as well as developing the \ac{ris} mutual coupling-aware estimators.

\appendices
\section{}\label{appendix_A}
Consider estimating the deterministic unknowns $\mathbf{x}\in\mathbb{R}^l$ from $\tilde{\mathbf{y}}\in\mathbb{C}^m$ based on the noisy observation model
\begin{equation}\label{eq_obsmodel}
  \tilde{\mathbf{y}} = \tilde{\mathbf{f}}(\mathbf{x}) + \tilde{\mathbf{A}}\tilde{\mathbf{n}},
\end{equation}
where $\tilde{\mathbf{A}}\in\mathbf{C}^{m\times n}$, $\tilde{\mathbf{n}}\in\mathbb{C}^n$ and $\tilde{\mathbf{n}}\sim\mathcal{C}\mathcal{N}(\mathbf{0},\sigma^2\mathbf{I}_n) $.
Suppose $\tilde{\mathbf{y}} = \mathbf{y}_\mathrm{R} + j\mathbf{y}_\mathrm{I}$, $\tilde{\mathbf{f}}(\mathbf{x}) = \mathbf{f}_\mathrm{R}(\mathbf{x})+j\mathbf{f}_\mathrm{I}(\mathbf{x})$, $\tilde{\mathbf{A}}=\mathbf{A}_\mathrm{R}+j\mathbf{A}_\mathrm{I}$ and $\tilde{\mathbf{n}}=\mathbf{n}_\mathrm{R}+j\mathbf{n}_\mathrm{I}$.
By separating the real and imaginary parts of the observations $\tilde{\mathbf{y}}$, we can rewrite the model~\eqref{eq_obsmodel} as 
\begin{equation}\label{eq_reobsmodel}
  \begin{bmatrix}
    \mathbf{y}_\mathrm{R}\\
    \mathbf{y}_\mathrm{I}
  \end{bmatrix} = 
  \underbrace{
  \begin{bmatrix}
    \mathbf{f}_\mathrm{R}(\mathbf{x})\\
    \mathbf{f}_\mathrm{I}(\mathbf{x})
  \end{bmatrix}}_{\mathbf{f}(\mathbf{x})} + \underbrace{
  \begin{bmatrix}
    \mathbf{A}_\mathrm{R}\mathbf{n}_\mathrm{R}-\mathbf{A}_\mathrm{I}\mathbf{n}_\mathrm{I}\\
    \mathbf{A}_\mathrm{R}\mathbf{n}_\mathrm{I}+\mathbf{A}_\mathrm{I}\mathbf{n}_\mathrm{R}
  \end{bmatrix}}_{\mathbf{n}}.
\end{equation}
Since 
\vspace{-1em}
\begin{align}
  \mathbf{A}_\mathrm{R}\mathbf{n}_\mathrm{R}-\mathbf{A}_\mathrm{I}\mathbf{n}_\mathrm{I}\sim\mathcal{N}\Big(\mathbf{0},\frac{\sigma^2}{2}(\mathbf{A}_\mathrm{R}\mathbf{A}_\mathrm{R}^\mathsf{T}+\mathbf{A}_\mathrm{I}\mathbf{A}_\mathrm{I}^\mathsf{T} )\Big),\\
  \mathbf{A}_\mathrm{R}\mathbf{n}_\mathrm{I}+\mathbf{A}_\mathrm{I}\mathbf{n}_\mathrm{R}\sim\mathcal{N}\Big(\mathbf{0},\frac{\sigma^2}{2}(\underbrace{\mathbf{A}_\mathrm{R}\mathbf{A}_\mathrm{R}^\mathsf{T}+\mathbf{A}_\mathrm{I}\mathbf{A}_\mathrm{I}^\mathsf{T}}_{ = \mathfrak{R}(\tilde{\mathbf{A}}\tilde{\mathbf{A}}^\mathsf{H})} )\Big),
\end{align} \vspace{-1em}
and
\begin{align}
  &\mathbb{E}\left[(\mathbf{A}_\mathrm{R}\mathbf{n}_\mathrm{R}-\mathbf{A}_\mathrm{I}\mathbf{n}_\mathrm{I})(\mathbf{A}_\mathrm{R}\mathbf{n}_\mathrm{I}+\mathbf{A}_\mathrm{I}\mathbf{n}_\mathrm{R})^\mathsf{T} \right]\notag\\
  =& \frac{\sigma^2}{2}(\mathbf{A}_\mathrm{R}\mathbf{A}_\mathrm{I}^\mathsf{T}-\mathbf{A}_\mathrm{I}\mathbf{A}_\mathrm{R}^\mathsf{T} ) = \frac{\sigma^2}{2}\mathfrak{I}(\tilde{\mathbf{A}}\tilde{\mathbf{A}}^\mathsf{H})^\mathsf{T},
\end{align}
we have 
\begin{equation}\label{eq_noisestatistics}
  \mathbf{n}\sim\mathcal{N}\Bigg(\mathbf{0}, \underbrace{\frac{\sigma^2}{2}\begin{bmatrix}
    \mathfrak{R}\left(\mathbf{A}\mathbf{A}^\mathsf{H} \right) & \mathfrak{I}\left(\mathbf{A}\mathbf{A}^\mathsf{H} \right)^\mathsf{T} \vspace{0.3em} \\
    \mathfrak{I}\left(\mathbf{A}\mathbf{A}^\mathsf{H} \right) & \mathfrak{R}\left(\mathbf{A}\mathbf{A}^\mathsf{H} \right)
  \end{bmatrix}}_{\mathbf{C}_n}\Bigg).
\end{equation}
Then the \ac{fim} of $\mathbf{x}$ can be derived as~\cite[B.3.3]{Stoica2005Spectral}
\begin{equation}
  \mathbf{J}(\mathbf{x}) = \begin{bmatrix}
    \mathfrak{R}(\frac{\partial \tilde{\mathbf{f}}(\mathbf{x})}{\partial \mathbf{x}})\\
    \mathfrak{I}(\frac{\partial \tilde{\mathbf{f}}(\mathbf{x})}{\partial \mathbf{x}})
  \end{bmatrix}^\mathsf{T} 
  \mathbf{C}_n^{-1}
  \begin{bmatrix}
    \mathfrak{R}(\frac{\partial \tilde{\mathbf{f}}(\mathbf{x})}{\partial \mathbf{x}})\\
    \mathfrak{I}(\frac{\partial \tilde{\mathbf{f}}(\mathbf{x})}{\partial \mathbf{x}})
  \end{bmatrix}.
\end{equation}
Letting $\mathbf{A}_0 = \mathbf{W}^\mathsf{H}$, $\mathbf{A}_r^{g,k}=\mathbf{W}^\mathsf{H}\mathbf{H}_{\text{R},2}^k\bm{\Gamma}_g$ and adding up the \acp{fim} over all the transmissions $g=1,\dots,G$ and subcarriers $k=1,\dots,K$ yield Proposition~\ref{pro_2}.
\vspace{-0.5em}

\bibliography{references}
\bibliographystyle{IEEEtran}

\end{document}